\renewcommand*\@fnsymbol[1]{\the#1}
\theoremstyle{plain}
\newtheorem{theorem}{Theorem}[section]
\newtheorem{lemma}[theorem]{Lemma}
\newtheorem{proposition}[theorem]{Proposition}
\newtheorem{corollary}[theorem]{Corollary}
\theoremstyle{definition}
\newtheorem{definition}[theorem]{Definition}
\theoremstyle{remark}
\newtheorem{remark}[theorem]{Remark}
\newtheorem{example}[theorem]{Example}
\newcommand{\Indexfin}{\mathop{\rm fin}\nolimits}
\newcommand{\Indexiqr}{\mathop{\rm iqr}\nolimits}
\newcommand{\Closure}{\mathop{\rm cl}\nolimits}
\newcommand{\e}{\varepsilon}
\newcommand{\VaR}{\mathop{\rm VaR}\nolimits}
\newcommand{\TVaR}{\mathop{\rm TVaR}\nolimits}
\newcommand{\cF}{{\mathcal F}}
\newcommand{\cM}{{\mathcal{M}}}
\newcommand{\cN}{{\mathcal N}}
\newcommand{\cR}{{\mathcal R}}
\newcommand{\E}{{\mathbb E}}
\newcommand{\N}{{\mathbb N}}
\newcommand{\probp}{{\mathbb P}}
\newcommand{\probq}{{\mathbb Q}}
\newcommand{\R}{{\mathbb R}}
\newcommand{\cA}{\mathscr A}
\newcommand{\cL}{\mathscr L}
\newcommand{\cS}{{\mathscr{S}}}
\newcommand{\cD}{{\mathscr{D}}}
\newcommand{\cU}{{\mathscr{U}}}
\newcommand{\cX}{{\mathscr{X}}}
\def\keywords{\vspace{.5em}
{\noindent\textbf{Keywords}:\,\relax%
}}
\def\MSCclassification{\vspace{.5em}
{\noindent\textbf{MSC}:\,\relax%
}}
\def\@fnsymbol#1{\ensuremath{\ifcase#1\or *\or 1\or 2\or
   3\or 4\or 5\or 6\or 7\or 8\else\@ctrerr\fi}}
\begin{document}

\title{Law-invariant risk measures:\\
extension properties and qualitative robustness\footnote{Partial support through the SNF project 51NF40-144611 ``Capital adequacy, valuation, and portfolio selection for insurance companies'' is gratefully acknowledged.}}

\author{\sc{Pablo Koch-Medina}\thanks{Email: \texttt{pablo.koch@bf.uzh.ch}}}
\affil{Department of Banking and Finance, University of Zurich, Switzerland}

\author{\sc{Cosimo Munari}\,\thanks{Email: \texttt{cosimo.munari@math.ethz.ch}}}
\affil{Department of Mathematics, ETH Zurich, Switzerland}

\date{January 9, 2014}

\maketitle

\begin{abstract}
We characterize when a convex risk measure associated to a law-invariant acceptance set in $L^\infty$ can be extended to $L^p$, $1\leq p<\infty$, preserving \textit{finiteness} and \textit{continuity}. This problem is strongly connected to the statistical robustness of the corresponding risk measures. Special attention is paid to concrete examples including risk measures based on expected utility, max-correlation risk measures, and distortion risk measures.
\end{abstract}

\keywords{extension of risk measures, acceptance sets, law invariance, statistical robustness, expected utility, max-correlation risk measures, distortion risk measures}

\MSCclassification{91B30, 91G80}

%\JELclassification{C60, G11, G22}

\parindent 0em \noindent

\section{Introduction}

The objective of this paper is to complement the paper~\cite{FilipovicSvindland2012} by Filipovi\'{c} and Svindland. The main result in that paper is Theorem 2.2 stating that every convex, law-invariant, lower semicontinuous map $f:L^\infty\to\R\cup\{\infty\}$ can be uniquely extended to a map on $L^1$ satisfying the same properties. In this sense, $L^1$ can be viewed as the canonical space for this type of maps. The results in~\cite{FilipovicSvindland2012} are presented in the context of a standard probability space but can be extended to a nonatomic setting as shown in Svindland~\cite{Svindland2010}.

\smallskip

The authors in \cite{FilipovicSvindland2012} are mostly concerned with the application of their results in the context of cash-additive risk measures. It is well-known that any cash-additive risk measure on $L^\infty$ is automatically finite-valued and (Lipschitz) continuous. It is also well-known that cash-additive risk measures on $L^p$, $1\leq p<\infty$, need not be either finite-valued or continuous. Consequently, the following two questions arise in a natural way: \textit{When does a cash-additive risk measure on $L^\infty$ admit a finite-valued, continuous extension to $L^p$ for a given $1\leq p<\infty$, and, which is the ``largest'' space $L^p$ for which such an extension exists?}

\smallskip

We provide a full answer to the previous questions, characterizing those convex, law-invariant, cash-additive risk measures defined on $L^\infty$ that can be extended to $L^p$ spaces preserving finiteness and continuity. In fact, our main result provides a characterization in the more general setting of risk measures studied by Farkas, Koch-Medina, and Munari in \cite{FarkasKochMunari2013a}, where cash additivity is not required. More precisely, we show that the existence of finite, continuous extensions depends on the properties of the underlying \textit{acceptance sets}. Special attention is paid to several concrete examples, including risk measures based on expected utility, max-correlation risk measures, and distortion risk measures. These examples show that, if finiteness and continuity are to be preserved, the ``canonical'' model space for convex, law-invariant risk measures is not always $L^1$ but can be any space $L^p$, $1\le p\le\infty$. In particular, there are (even cash-additive) risk measures that cannot be extended beyond $L^\infty$ maintaining finiteness and continuity.

\smallskip

The previous questions turn out to be intimately related to the \textit{statistical robustness} for risk measures as discussed by Kr\"{a}tschmer, Schied, and Z\"{a}hle in \cite{KraetschmerSchiedZaehle2013}, highlighting the practical relevance of our results and our examples. Indeed, when risk measures are implemented to define capital adequacy requirements for financial institutions, or margin requirements for the participants of a central exchange, their statistical robustness is crucial to guarantee that the corresponding capital requirements are stable with respect to small changes in the distributions of the underlying positions. In this respect, our examples can be seen to be complementary to \cite{KraetschmerSchiedZaehle2013}.

%%%%%%%%%%%%%%%%%%%%%%%%%%%%%%%%%%%%%%%%%

\section{Preliminaries}

Let $\cX$ be an ordered topological vector space over $\R$ with positive cone $\cX_+$ and topological dual $\cX'$. The space $\cX$ represents the set of all possible \textit{capital positions} -- assets net of liabilities -- of financial institutions at a fixed future date $t=T$.

\medskip

We assume $\cA\subset\cX$ is an \textit{acceptance set}, i.e., a nonempty, proper subset of $\cX$ satisfying $\cA+\cX_+\subset\cA$. We interpret the elements of $\cA$  as those capital positions which are deemed acceptable by an external or internal ``regulator''. Moreover, let $S=(S_0,S_T)$ represent a \textit{traded asset} with price $S_0>0$ at time $t=0$ and nonzero, terminal payoff $S_T\in\cX_+$ at time $t=T$. The \textit{risk measure} associated to $\cA$ and $S$ is the map $\rho_{\cA,S}:\cX\to\overline{\R}$ defined by
\begin{equation}
\label{risk measure intro}
\rho_{\cA,S}(X):=\inf\left\{m\in\R \,; \ X+\frac{m}{S_0}S_T\in\cA\right\}\,.
\end{equation}
For a position $X\in\cX$, the quantity $\rho_{\cA,S}(X)$ represents the ``minimum'' amount of capital that needs to be raised and invested in the asset $S$ to guarantee acceptability. Clearly, a negative $\rho_{\cA,S}(X)$ implies that capital is returned to shareholders. The motivation for studying this type of risk measures is discussed in detail in \cite{FarkasKochMunari2013a}, where general results on finiteness and continuity are also provided.

\medskip

We are particularly interested in the case where $\cX$ is the Banach space $L^p$, $1\leq p\leq\infty$, defined over a probability space $(\Omega,\cF,\probp)$, which we assume to be \textit{nonatomic}. The corresponding norm is denoted by $\left\|\cdot\right\|_p$. The space $L^p$ becomes a Banach lattice when additionally equipped with the canonical order structure, i.e., $X\ge Y$ whenever this inequality holds almost surely. The \textit{conjugate} of $p$ will be denoted by $p'$, i.e. we set $p':=\frac{p}{1-p}$.

\medskip

When $\cX=L^p$ for some $1\leq p\leq\infty$, we can consider risk measures $\rho_{\cA,S}$ with respect to the {\em cash asset} $S=(1,1_\Omega)$. These risk measures are called \textit{cash additive}. We refer to \cite{FoellmerSchied2011} for a comprehensive treatment when $p=\infty$. For cash-additive risk measures, we simply write for $X\in L^p$
\begin{equation}
\rho_\cA(X) := \rho_{\cA,S}(X) = \inf\{m\in\R \,; \ X+m\in\cA\}\,.
\end{equation}
Note that our general setting also allows for a traded asset $S=(S_0,S_T)$ with an arbitrary, positive, random payoff. Hence, we can also cover situations were no risk-free security exists, as discussed in \cite{FarkasKochMunari2013a}.

%%%%%%%%%%%%%%%%%%%%%%%%%%%%%%%%%%%%%%%%%%%

\section{The general extension theorem}

In this section we provide the key extension result for risk measures of the form $\rho_{\cA,S}$. Although our main interest lies in convex risk measures on $L^p$ spaces, to highlight the underlying structure of our result we first study extension theorems in the setting of abstract ordered topological vector spaces. Throughout this section $\cL$ and $\cS$ will denote two ordered topological vector spaces over $\R$ with respective positive cones $\cL_+$ and $\cS_+$. We assume that $\cS$ is a dense subspace of $\cL$. Hence, in addition to its own topology, the space $\cS$ can be equipped with the relative topology induced by $\cL$, which we call the $\cL$-{\em topology}. Moreover, since every functional on $\cL$ can be restricted to a functional on $\cS$, we may also consider the weak topology $\sigma(\cS,\cL')$ on $\cS$ where, by abuse of notation, we do not distinguish between functionals on $\cL$ and their restrictions to $\cS$. In the next section we will take $\cL=L^p$, for some $1\leq p<\infty$, and $\cS=L^\infty$.

\medskip

The following theorem is our main result. For a subset $\cA\subset\cL$, we denote by $\Closure_\cL(\cA)$ the closure of $\cA$ with respect to the topology on $\cL$.

\begin{theorem}
\label{main theorem}
Let $\cA\subset\cS$ be a convex, $\sigma(\cS,\cL')$-closed acceptance set and $S=(S_0,S_T)$ a traded asset with $S_T\in\cS_+$. Assume $\rho_{\cA,S}$ is finite-valued and continuous on $\cS$. The following statements are equivalent:
\begin{enumerate}[(a)]
	\item $\rho_{\cA,S}$ can be extended to a finite-valued, continuous risk measure on $\cL$;
	\item $\rho_{\cA,S}$ is continuous at $0$ with respect to the $\cL$-topology;
	\item $\cA$ has nonempty interior with respect to the $\cL$-topology;
	\item $\Closure_\cL(\cA)$ has nonempty interior in $\cL$.
\end{enumerate}
In this case, the extension is unique and is given by $\rho_{\Closure_\cL(\cA),S}$.
\end{theorem}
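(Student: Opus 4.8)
The plan is to prove the cycle $(a)\Rightarrow(b)\Rightarrow(c)\Rightarrow(d)\Rightarrow(a)$ and then treat uniqueness separately. Throughout I will use two structural features of maps of the form $\rho_{\cA,S}$: the translation identity $\rho_{\cA,S}(X+\frac{t}{S_0}S_T)=\rho_{\cA,S}(X)-t$ for all $t\in\R$, obtained by reindexing the infimum; and the sandwich $\{X:\rho_{\cA,S}(X)<0\}\subseteq\cA\subseteq\{X:\rho_{\cA,S}(X)\le 0\}$, where the first inclusion uses $S_T\in\cS_+$ and $\cA+\cS_+\subseteq\cA$. I also record at the outset the identity that ties the $\cL$-closure back to $\cA$: because every functional in $\cL'$ is continuous for the $\cL$-topology on $\cS$, the topology $\sigma(\cS,\cL')$ is coarser than the $\cL$-topology on $\cS$, so the $\sigma(\cS,\cL')$-closed set $\cA$ is closed in the $\cL$-topology on $\cS$; that is, $\Closure_\cL(\cA)\cap\cS=\cA$. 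Call this $(\star)$.

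The implication $(a)\Rightarrow(b)$ is immediate, since the restriction to $\cS$ of an $\cL$-continuous extension is continuous for the $\cL$-topology, in particular at $0$. For $(b)\Rightarrow(c)$ I would fix a small $\e>0$, set $m_0:=\rho_{\cA,S}(0)+2\e$ and $Z_0:=\frac{m_0}{S_0}S_T\in\cS$, so that the translation identity gives $\rho_{\cA,S}(Z_0)=-2\e$. By continuity at $0$ there is an $\cL$-open neighborhood $U$ of $0$ with $\rho_{\cA,S}(X)<\rho_{\cA,S}(0)+\e$ on $U\cap\cS$; applying the translation identity again yields $\rho_{\cA,S}(X+Z_0)<-\e<0$ there, and the sandwich gives $(Z_0+U)\cap\cS\subseteq\cA$, exhibiting $Z_0$ as an $\cL$-interior point of $\cA$. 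For $(c)\Rightarrow(d)$ I would take a nonempty $\cL$-open set $V$ with $V\cap\cS\subseteq\cA$; since $\cS$ is dense and $V$ is open, $V\cap\cS$ is dense in $V$, whence $V\subseteq\Closure_\cL(V\cap\cS)\subseteq\Closure_\cL(\cA)$, so $\Closure_\cL(\cA)$ has nonempty $\cL$-interior.

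The substance is $(d)\Rightarrow(a)$, where I set $\bar\cA:=\Closure_\cL(\cA)$ and propose $\rho_{\bar\cA,S}$ as the extension. First I would verify $\bar\cA$ is a proper acceptance set in $\cL$: convexity and $\cL$-closedness are clear, monotonicity $\bar\cA+\cL_+\subseteq\bar\cA$ follows from $\cA+\cS_+\subseteq\cA$ together with the density of $\cS_+$ in $\cL_+$ (as in the intended $L^p$ setting), and properness drops out of the finiteness obtained below. Agreement of $\rho_{\bar\cA,S}$ with $\rho_{\cA,S}$ on $\cS$ is exactly where $(\star)$ enters: from $\cA\subseteq\bar\cA$ one always has $\rho_{\bar\cA,S}\le\rho_{\cA,S}$, while for $X\in\cS$ any admissible $m$ gives $X+\frac{m}{S_0}S_T\in\bar\cA\cap\cS=\cA$, so the two infima coincide and $\rho_{\bar\cA,S}$ is finite on the dense subspace $\cS$. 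To upgrade this to finiteness and continuity on all of $\cL$ I would argue by convex analysis: $\rho_{\bar\cA,S}$ is convex and satisfies the translation identity, and by $(d)$ it is bounded above by $0$ on a nonempty $\cL$-open set, since any interior point $z_0$ of $\bar\cA$ satisfies $\rho_{\bar\cA,S}\le 0$ on a whole neighborhood of $z_0$. Its effective domain contains the dense set $\cS$, so the $\cL$-closure of the domain is all of $\cL$; invoking that for a convex set with nonempty interior the interior of the closure equals the interior, the domain has interior $\cL$ and hence equals $\cL$. A convex function that is finite on $\cL$ and bounded above on an open set is continuous, and it cannot take the value $-\infty$ because it is already real-valued on $\cS$. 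This yields $(a)$. I expect the monotonicity step and the ``domain is all of $\cL$'' step to be the main obstacles, the latter being the natural place to invoke the finiteness and continuity machinery of \cite{FarkasKochMunari2013a}.

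Uniqueness is then a density argument: any finite-valued, $\cL$-continuous extension agrees with $\rho_{\Closure_\cL(\cA),S}$ on the dense subspace $\cS$ (both equal $\rho_{\cA,S}$ there), and two continuous maps agreeing on a dense set coincide. Hence the extension is unique and is given by $\rho_{\Closure_\cL(\cA),S}$.
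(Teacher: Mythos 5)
Your proof is correct, but its core --- the implication \textit{(d)}$\Rightarrow$\textit{(a)} --- follows a genuinely different route from the paper's. The paper establishes that $\rho_{\Closure_\cL(\cA),S}$ extends $\rho_{\cA,S}$ by duality: the $\sigma(\cS,\cL')$-closedness of $\cA$ is used to invoke the dual representation of Lemma~\ref{lemma dual repr}, giving $\rho_{\cA,S}(X)=\sup_{\psi\in\cL'_{+,S}}\{\sigma_\cA(\psi)-\psi(X)\}$ on $\cS$, and since $\sigma_\cA=\sigma_{\Closure_\cL(\cA)}$ on $\cL'$ the same supremum defines $\rho_{\Closure_\cL(\cA),S}$ on all of $\cL$; finiteness is then proved by a Hahn--Banach separation argument (a functional separating a point from the effective domain would annihilate the dense subspace $\cS$), and continuity by citing Lemma~\ref{lemma transversality}. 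You use no duality at all: the $\sigma(\cS,\cL')$-closedness enters only through your identity $(\star)$, $\Closure_\cL(\cA)\cap\cS=\cA$, which yields agreement on $\cS$ directly from the definition of the infimum; finiteness follows from the fact that $\Interior(\Closure_\cL(\cD))=\Interior(\cD)$ for a convex set $\cD$ with nonempty interior, combined with density of $\cS$; and continuity from the classical criterion that a real-valued convex function bounded above on a nonempty open set is continuous. Similarly, for \textit{(b)}$\Rightarrow$\textit{(c)} you re-prove the cited Lemma~\ref{lemma transversality}(i) in this special case via the translation identity and the sandwich inclusion $\{\rho_{\cA,S}<0\}\subset\cA\subset\{\rho_{\cA,S}\le0\}$, where the paper simply cites it. Your route is more elementary and self-contained --- the interior-of-closure fact holds in any topological vector space and requires no separation theorem --- whereas the paper's route produces, as a by-product, the explicit dual formula for the extension and recycles results from the companion papers. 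Two points to tighten: the exclusion of the value $-\infty$ deserves its one-line argument (once the effective domain is all of $\cL$, a single value $-\infty$ would propagate to every point, each being a midpoint of the $-\infty$ point and another point of the domain, contradicting real-valuedness on $\cS$); and the monotonicity $\Closure_\cL(\cA)+\cL_+\subset\Closure_\cL(\cA)$, which you honestly flag as needing density of $\cS_+$ in $\cL_+$, is indeed not guaranteed by the abstract hypotheses --- the paper silently has the same issue (it is needed for $\Closure_\cL(\cA)$ to qualify as an acceptance set), and it does hold in the intended application $\cS=L^\infty$, $\cL=L^p$.
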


\medskip

Before proving Theorem~\ref{main theorem}, it is useful to collect some auxiliary results.

%%%%%%%%%%%%%%%%%%%%%%%%%%%%%%%%%%%%%%%%%%%%%%%%

\subsubsection*{Dual representations and continuity}

Let $\cX$ be an ordered topological vector space over $\R$ with positive cone $\cX_+$ and topological dual $\cX'$. By $\cX'_+$ we denote the set of all functionals $\psi\in\cX'$ satisfying $\psi(X)\geq0$ for every $X\in\cX_+$. We start by stating a dual representation result for convex risk measures of the form $\rho_{\cA,S}$ in a version that is convenient for our purposes.

\medskip

The (lower) \textit{support function} of a subset $\cA\subset\cX$ is the map $\sigma_\cA:\cX'\to\R\cup\{-\infty\}$ defined by
\begin{equation}
\sigma_\cA(\psi):=\inf_{A\in\cA}\psi(A)\,.
\end{equation}
The set
\begin{equation}
B(\cA):=\{\psi\in\cX' \,; \ \sigma_\cA(\psi)>-\infty\}
\end{equation}
is called the {\em barrier cone} of $\cA$. Clearly, the support function of a set $\cA\subset\cX$ always coincides with the support function of its closure.

\medskip

By combining Corollary 4.14 and Theorem 4.16 in \cite{FarkasKochMunari2013b} we obtain the following result.

\begin{lemma}
\label{lemma dual repr}
Let $\cA\subset\cX$ be a closed, convex acceptance set and $S=(S_0,S_T)$ a traded asset, and set
\begin{equation}
\cX'_{+,S}:=\{\psi\in\cX'_+ \,; \ \psi(S_T)=S_0\}\,.
\end{equation}
Then the following statements are equivalent:
\begin{enumerate}[(a)]
	\item $\rho_{\cA,S}$ attains some finite value;
	\item $\rho_{\cA,S}$ does not attain the value $-\infty$;
%	\item there exists $\psi\in B(\cA)$ with $\psi(S_T)>0$;
	\item $\cX'_{+,S}\cap B(\cA)$ is nonempty.
\end{enumerate}
In this case, for every $X\in\cX$ we have
\begin{equation}
\label{dual representation auxiliary 1}
\rho_{\cA,S}(X)=\sup_{\psi\in\cX'_{+,S}}\{\sigma_\cA(\psi)-\psi(X)\}\,.
\end{equation}
\end{lemma}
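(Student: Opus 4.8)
The plan is to prove Lemma \ref{lemma dual repr} by reducing the assertions about $\rho_{\cA,S}$ to standard convex-analytic facts about the support function $\sigma_\cA$ and its barrier cone, exploiting the definition \eqref{risk measure intro}. The key observation is that membership $X+\frac{m}{S_0}S_T\in\cA$ can be tested against every $\psi\in\cX'_+$: if $A\in\cA$ then $\psi(A)\ge\sigma_\cA(\psi)$, so whenever $X+\frac{m}{S_0}S_T\in\cA$ we get $\psi(X)+\frac{m}{S_0}\psi(S_T)\ge\sigma_\cA(\psi)$. For $\psi\in\cX'_{+,S}$ (so $\psi(S_T)=S_0$) this rearranges to $m\ge\sigma_\cA(\psi)-\psi(X)$, giving the lower bound $\rho_{\cA,S}(X)\ge\sup_{\psi\in\cX'_{+,S}}\{\sigma_\cA(\psi)-\psi(X)\}$ for every $X$. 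This elementary direction already does a lot: it shows that as soon as $\cX'_{+,S}\cap B(\cA)$ is nonempty, the infimum defining $\rho_{\cA,S}(X)$ is bounded below by a finite number, so $\rho_{\cA,S}$ never attains $-\infty$; this is the implication $(c)\Rightarrow(b)$, and combined with finiteness at a single point it will feed into $(c)\Rightarrow(a)$.

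Next I would close the cycle of equivalences. The implication $(a)\Rightarrow(b)$ is immediate since a finite value is in particular not $-\infty$. For $(b)\Rightarrow(c)$, I would argue contrapositively: if $\cX'_{+,S}\cap B(\cA)=\emptyset$, then for every admissible dual element $\psi$ we have $\sigma_\cA(\psi)=-\infty$, and I would use this to show that the infimum in \eqref{risk measure intro} is $-\infty$ at every point, so $\rho_{\cA,S}\equiv-\infty$, contradicting $(b)$. This is precisely where the hard part lies and where the cited results from \cite{FarkasKochMunari2013b} enter: passing from ``the support function is $-\infty$ on the whole admissible dual cone'' to ``the risk measure is identically $-\infty$'' is the nontrivial separation-theorem step, since it requires producing, for each target level $m$, an acceptable position of the form $X+\frac{m}{S_0}S_T$, i.e. it requires that $\cA$ be large enough in the directions not controlled by any finite supporting functional. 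Closedness and convexity of $\cA$ are essential here so that the bipolar theorem applies and $\cA$ is recovered as the intersection of the half-spaces $\{Z\,;\,\psi(Z)\ge\sigma_\cA(\psi)\}$ over $\psi\in B(\cA)$.

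Once the equivalence $(a)\Leftrightarrow(b)\Leftrightarrow(c)$ is in place, the dual representation \eqref{dual representation auxiliary 1} follows by upgrading the easy lower bound to an equality. The plan is to fix $X$ and show the reverse inequality $\rho_{\cA,S}(X)\le\sup_{\psi\in\cX'_{+,S}}\{\sigma_\cA(\psi)-\psi(X)\}$. For this I would take any real $m$ strictly below $\rho_{\cA,S}(X)$, so that $X+\frac{m}{S_0}S_T\notin\cA$, and separate this point from the closed convex set $\cA$ by a functional $\phi\in\cX'$ with $\phi\bigl(X+\frac{m}{S_0}S_T\bigr)<\sigma_\cA(\phi)$. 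The monotonicity property $\cA+\cX_+\subset\cA$ forces $\phi\in\cX'_+$ (otherwise the support function would be $-\infty$ and the strict inequality could not hold), and $S_T\in\cX_+$ with $S_0>0$ lets me normalize $\phi$ so that $\phi(S_T)=S_0$, placing a rescaled $\psi:=\frac{S_0}{\phi(S_T)}\phi$ in $\cX'_{+,S}$; I must check $\phi(S_T)>0$, which holds because the support function is finite on $\phi$ and $\cA$ is unbounded in the $S_T$-direction. Rearranging the separation inequality then yields $m<\sigma_\cA(\psi)-\psi(X)$, and letting $m\uparrow\rho_{\cA,S}(X)$ gives the desired bound.

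The main obstacle throughout is the interplay between the recession/direction structure of $\cA$ and the normalization $\psi(S_T)=S_0$: one must ensure that the separating functional can always be chosen positive and scaled to have unit pairing with $S_T$, and that $\phi(S_T)$ is genuinely positive rather than zero. Both facts rest on the acceptance-set axiom $\cA+\cX_+\subset\cA$ together with $S_T\in\cX_+\setminus\{0\}$, and this is exactly the content packaged into Corollary 4.14 and Theorem 4.16 of \cite{FarkasKochMunari2013b}; accordingly I would invoke those results to certify the positivity and normalization steps rather than redo the underlying separation arguments from scratch.
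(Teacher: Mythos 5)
There are genuine gaps in two of the steps you claim to carry out yourself. (For context: the paper offers no proof of this lemma at all --- it obtains it by quoting Corollary 4.14 and Theorem 4.16 of \cite{FarkasKochMunari2013b}, the same results you defer to at the end, so your citations are consistent with the paper; and your elementary parts are fine: the lower bound $\rho_{\cA,S}(X)\ge\sup_{\psi\in\cX'_{+,S}}\{\sigma_\cA(\psi)-\psi(X)\}$, hence $(c)\Rightarrow(b)$, and $(c)\Rightarrow(a)$ via $\rho_{\cA,S}(A)\le0$ for $A\in\cA\neq\emptyset$.) The first flaw is your claim that $(a)\Rightarrow(b)$ is ``immediate since a finite value is in particular not $-\infty$'': this is a quantifier error, because $(a)$ is existential (finite at \emph{some} point) while $(b)$ is universal (never $-\infty$ at \emph{any} point), and a convex function can be finite at one point and $-\infty$ at another. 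The implication is true, but only thanks to closedness of $\cA$: if $\rho_{\cA,S}(Y_0)=-\infty$ then $Y_0+\frac{t}{S_0}S_T\in\cA$ for all $t\in\R$, so by convexity $\lambda X+(1-\lambda)Y_0+\frac{c}{S_0}S_T\in\cA$ for every $c\in\R$, $\lambda\in(0,1)$ and every $X$ with $\rho_{\cA,S}(X)<\infty$; letting $\lambda\to1$ and using closedness gives $X+\frac{c}{S_0}S_T\in\cA$ for all $c$, i.e.\ $\rho_{\cA,S}(X)=-\infty$, so no finite value is attained. Relatedly, the intermediate goal in your $(b)\Rightarrow(c)$ step --- ``$\rho_{\cA,S}\equiv-\infty$'' when $\cX'_{+,S}\cap B(\cA)=\emptyset$ --- is false: take $\cX=\R^2$ with the componentwise order, $\cA=\{(x,y)\,;\,y\ge0\}$ and $S_T=(1,0)$, $S_0=1$; then $B(\cA)=\{(0,b)\,;\,b\ge0\}$ is disjoint from $\cX'_{+,S}$, yet $\rho_{\cA,S}=+\infty$ on $\{y<0\}$. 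All one can and need show is that $\rho_{\cA,S}$ is never \emph{finite}, which is the contrapositive of $(a)\Rightarrow(c)$ combined with $\rho_{\cA,S}\le0$ on $\cA$ --- so your cycle tacitly rests on $(a)\Rightarrow(c)$, which your sketch never establishes.

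The second and more serious flaw is the normalization step in your proof of \eqref{dual representation auxiliary 1}. Positivity of the separating functional $\phi$ gives only $\phi(S_T)\ge0$, and your justification that $\phi(S_T)>0$ ``because the support function is finite on $\phi$ and $\cA$ is unbounded in the $S_T$-direction'' only excludes $\phi(S_T)<0$ (which positivity already does); it is powerless against $\phi(S_T)=0$, and in the $\R^2$ example above \emph{every} functional separating a point from $\cA$ annihilates $S_T$. This degenerate case is the actual crux, and it is exactly where hypothesis $(c)$ must enter: fix $\psi_0\in\cX'_{+,S}\cap B(\cA)$ and replace $\phi$ by $\phi_\e:=\phi+\e\psi_0$ with $\e>0$ small enough that the strict separation inequality survives (possible since $\sigma_\cA(\phi_\e)\ge\sigma_\cA(\phi)+\e\,\sigma_\cA(\psi_0)>-\infty$); then $\phi_\e\in\cX'_+$, $\phi_\e(S_T)\ge\e S_0>0$, and rescaling puts it in $\cX'_{+,S}$ with $m<\sigma_\cA(\psi)-\psi(X)$ as desired. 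Note this perturbation cannot also prove $(a)\Rightarrow(c)$, since it presupposes $(c)$; a non-circular route is to observe that $\rho_{\cA,S}$ is convex and lower semicontinuous (its sublevel sets are intersections of translates of the closed set $\cA$), hence proper once $(a)$ and, by the argument above, $(b)$ hold, so that Fenchel--Moreau applies: a direct computation shows the conjugate is finite at $-\psi$ only if $\psi(S_T)=S_0$ and $\sigma_\cA(\psi)>-\infty$ (where it equals $-\sigma_\cA(\psi)$), any such $\psi$ being automatically positive because $\sigma_\cA(\psi)=-\infty$ otherwise by monotonicity of $\cA$; this yields $(a)\Rightarrow(c)$ and the representation in one stroke. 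In short, your architecture is repairable, but as written both the $(a)\Rightarrow(b)$ step and the $\phi(S_T)>0$ step fail.
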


\smallskip

\begin{remark}
\label{remark on application of lemma}
Later we will apply this result to the situation where $\cX$ is either $\cL$, equipped with its own topology, or $\cS$, equipped with the $\sigma(\cS,\cL')$ topology. Since both of these spaces have the same dual $\cL^\prime$ we see that for a subset $\cA\subset\cS$
\begin{equation}
\sigma_\cA(\psi) = \sigma_{\Closure_\cL(\cA)}(\psi) \ \ \ \mbox{for all} \ \psi\in\cL'\,,
\end{equation}
where $\sigma_\cA$ is applied to the restriction of $\psi\in\cL'$ to $\cS$. In particular, the intersection $\cL'_{+,S}\cap B(\cA)$ is nonempty if and only if $\cL'_{+,S}\cap B(\Closure_\cL(\cS))$ is also nonempty.
\end{remark}

\medskip

We now recall a necessary and a sufficient condition for a risk measure of the form $\rho_{\cA,S}$ to be continuous on $\cX$. For a proof, we refer to Lemma 2.5 and Theorem 3.10 in \cite{FarkasKochMunari2013a}, respectively.

\begin{lemma}
\label{lemma transversality}
Let $\cA\subset\cX$ be an acceptance set and $S=(S_0,S_T)$ a traded asset.
\begin{enumerate}[(i)]
	\item If $\rho_{\cA,S}$ is continuous at some point $X\in\cX$ with $\rho_{\cA,S}(X)<\infty$, then $\cA$ has nonempty interior.
	\item Assume $\cA$ is convex and has nonempty interior. Then $\rho_{\cA,S}$ is continuous whenever it is finite-valued.
\end{enumerate}
\end{lemma}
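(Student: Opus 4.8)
The plan is to prove the two parts separately, relying on two elementary structural facts about $\rho_{\cA,S}$ that follow directly from the definition~\eqref{risk measure intro} together with the defining property $\cA+\cX_+\subset\cA$ of an acceptance set. First I would record the \emph{$S$-translation property}: for every $X\in\cX$ and every $t\in\R$ one has $\rho_{\cA,S}(X+\tfrac{t}{S_0}S_T)=\rho_{\cA,S}(X)-t$, which is immediate from the change of variable $m\mapsto m+t$ in the defining infimum. Second, I would establish the sign dictionary relating membership in $\cA$ to the sign of $\rho_{\cA,S}$: since $m=0$ is admissible when $X\in\cA$, we always have $X\in\cA\Rightarrow\rho_{\cA,S}(X)\leq 0$; conversely, if $\rho_{\cA,S}(X)<0$, then some $m<0$ satisfies $X+\tfrac{m}{S_0}S_T\in\cA$, and because $-\tfrac{m}{S_0}S_T\in\cX_+$ (as $S_T\in\cX_+$ and $-m/S_0>0$), monotonicity yields $X=(X+\tfrac{m}{S_0}S_T)-\tfrac{m}{S_0}S_T\in\cA+\cX_+\subset\cA$. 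Hence $\rho_{\cA,S}(X)<0\Rightarrow X\in\cA$.

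For part (i), let $m_0:=\rho_{\cA,S}(X)<\infty$. Since $Y\mapsto Y+\tfrac{t}{S_0}S_T$ is a homeomorphism of $\cX$ and shifts the value of $\rho_{\cA,S}$ by a constant, the translation property shows that continuity of $\rho_{\cA,S}$ at $X$ transfers to continuity at the translate $X':=X+\tfrac{m_0+1}{S_0}S_T$, where $\rho_{\cA,S}(X')=m_0-(m_0+1)=-1$. Applying continuity at $X'$ with $\e=1$, I obtain a neighborhood $U$ of $X'$ on which $\rho_{\cA,S}<0$. By the sign dictionary, $U\subset\cA$, so $\cA$ has nonempty interior. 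This part is clean and requires only the two preliminary facts.

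For part (ii), I would first note that convexity of $\cA$ forces $\rho_{\cA,S}$ to be convex: if $X_i+\tfrac{m_i}{S_0}S_T\in\cA$ for $i=1,2$, then the convex combination lies in $\cA$ by the representation $\lambda X_1+(1-\lambda)X_2+\tfrac{\lambda m_1+(1-\lambda)m_2}{S_0}S_T$, whence $\rho_{\cA,S}(\lambda X_1+(1-\lambda)X_2)\leq\lambda m_1+(1-\lambda)m_2$, and taking infima gives convexity. Next, since $\cA$ has nonempty interior, I would pick an open set $V\subset\Interior(\cA)$; on $V$ the sign dictionary gives $\rho_{\cA,S}\leq 0$, so the (finite-valued) convex function $\rho_{\cA,S}$ is bounded above on a nonempty open set. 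The conclusion then follows from the standard continuity principle for convex functions.

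The main obstacle is the correct invocation of that continuity principle in the present generality: $\cX$ is only assumed to be a topological vector space, not a normed or Fréchet space, so I must use the version valid in arbitrary topological vector spaces, namely that a convex function which is bounded above on a neighborhood of a point of $\Interior(\Dom\rho_{\cA,S})$ is continuous throughout $\Interior(\Dom\rho_{\cA,S})$. The key point making this applicable is that $\rho_{\cA,S}$ is finite-valued, so $\Dom\rho_{\cA,S}=\cX$ and its interior is all of $\cX$; upper boundedness on $V$ then yields (via convexity, which also provides a matching lower bound near each interior point) local boundedness and hence continuity on all of $\cX$. I would make sure to cite or reproduce this principle in a form that does not presuppose metrizability.
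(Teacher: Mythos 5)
Your proof is correct, and a point of comparison worth knowing: the paper contains no proof of this lemma at all — it defers to Lemma 2.5 and Theorem 3.10 of \cite{FarkasKochMunari2013a} — and the arguments given there run along essentially the lines you reconstruct. Your two preliminary facts are exactly the right scaffolding: the translation property $\rho_{\cA,S}(X+\tfrac{t}{S_0}S_T)=\rho_{\cA,S}(X)-t$ together with the sandwich $\{\rho_{\cA,S}<0\}\subset\cA\subset\{\rho_{\cA,S}\leq 0\}$ (your ``sign dictionary'', which uses $\cA+\cX_+\subset\cA$ and $S_T\in\cX_+$, $S_0>0$) gives (i) by translating the point of continuity to a point where the risk measure is strictly negative; and for (ii), convexity of $\rho_{\cA,S}$ plus upper-boundedness on an open subset of $\Interior(\cA)$ feeds into the classical continuity criterion for convex functions, which, as you rightly insist, holds in arbitrary topological vector spaces (local boundedness above propagates from one point to every point of the interior of the domain, and convexity supplies the matching lower bound), so no metrizability or local convexity is needed.

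One small edge case in part (i): the hypothesis is only $\rho_{\cA,S}(X)<\infty$, so a priori $m_0:=\rho_{\cA,S}(X)=-\infty$ is possible, in which case your translate $X':=X+\tfrac{m_0+1}{S_0}S_T$ is undefined. This case is immediate, though: continuity at $X$ with value $-\infty$ directly yields a neighborhood of $X$ on which $\rho_{\cA,S}<0$, and your sign dictionary places that neighborhood inside $\cA$. With that one-line patch the argument is complete.
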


%%%%%%%%%%%%%%%%%%%%%%%%%%%%%%%%%%%%%%%%%

\subsubsection*{Proof of Theorem~\ref{main theorem}}

Assume \textit{(a)} holds, i.e., $\rho_{\cA,S}$ admits an extension to a finite-valued, continuous map on $\cL$. Then, clearly, $\rho_{\cA,S}$ must also be continuous with respect to the $\cL$-topology. It follows that \textit{(b)} holds.

\smallskip

Let \textit{(b)} hold so that $\rho_{\cA,S}$ is continuous at $0$ with respect to the $\cL$-topology. Since $\rho_{\cA,S}$ is finite at $0$, Lemma \ref{lemma transversality} implies that $\cA$ must have nonempty interior with respect to the $\cL$-topology, hence \textit{(c)} holds.

\smallskip

Now, assume that \textit{(c)} holds so that $\cA$ has nonempty interior with respect to the $\cL$-topology. As a result, we find an open subset $\cU$ of $\cL$ such that $\cU\cap\cS\subset\cA$. Since $\cS$ is dense in $\cL$ and $\cU$ is open in $\cL$, we have that $\cU\cap\cS$ is nonempty and
\begin{equation}
\Closure_\cL(\cU) = \Closure_\cL(\cU\cap\cS)\,.
\end{equation}
It follows that
\begin{equation}
\cU \subset \Closure_\cL(\cU) = \Closure_\cL(\cU\cap\cS) \subset \Closure_\cL(\cA)\,,
\end{equation}
proving that $\Closure_\cL(\cA)$ has nonempty interior in $\cL$ and, hence, that \textit{(d)} holds.

\smallskip

Finally, assume that \textit{(d)} holds, i.e., $\Closure_\cL(\cA)$ has nonempty interior in $\cL$. Since $\cA$ is convex and $\sigma(\cS,\cL')$-closed and $\rho_{\cA,S}$ is finite-valued on $\cS$, Lemma~\ref{lemma dual repr} implies that $\cL'_{+,S}\cap B(\cA)$ is nonempty and
\begin{equation}
\rho_{\cA,S}(X)=\sup_{\psi\in\cL'_{+,S}}\{\sigma_\cA(\psi)-\psi(X)\}
\end{equation}
for every $X\in\cS$.

\smallskip

In particular, it follows that $\cL'_{+,S}\cap B(\Closure_\cL(\cA))$ is nonempty by Remark \ref{remark on application of lemma}, so that we can apply Lemma~\ref{lemma dual repr} once again to obtain
\begin{equation}
\rho_{\Closure_\cL(\cA),S}(X) = \sup_{\psi\in\cL'_{+,S}}\{\sigma_{\Closure_\cL(\cA)}(\psi)-\psi(X)\} = \sup_{\psi\in\cL'_{+,S}}\{\sigma_\cA(\psi)-\psi(X)\}
\end{equation}
for all $X\in\cL$. This shows that $\rho_{\Closure_\cL(\cA),S}$ extends $\rho_{\cA,S}$ to the whole of $\cL$. We claim that $\rho_{\Closure_\cL(\cA),S}$ is finite-valued and continuous.

\smallskip

Indeed, note first that, again by Lemma \ref{lemma dual repr}, the map $\rho_{\Closure_\cL(\cA),S}$ cannot take the value $-\infty$. Consider now the convex set
\begin{equation}
\cD := \{X\in\cL \,; \ \rho_{\Closure_\cL(\cA),S}(X)<\infty\} = \{X\in\cL \,; \ \rho_{\Closure_\cL(\cA),S}(X)\in\R\}\,.
\end{equation}
Since $\Closure_\cL(\cA)\subset\cD$, the interior of $\cD$ is nonempty. Now, assume there exists $X\in\cL\setminus\cD$. In this case, by a standard separation argument we find a nonzero $\psi\in\cL'$ such that $\psi(X)\leq\psi(Y)$ for every $Y\in \cD$. Since $\rho_{\cA,S}$ is finite-valued, the set $\cD$ contains the subspace $\cS$. As a consequence, $\psi$ must annihilate $\cS$ and therefore, by density, the whole of $\cL$. This is not possible since $\psi$ was nonzero. Hence, $\cD=\cL$ showing that $\rho_{\Closure_\cL(\cA),S}$ is finite-valued and, by Lemma \ref{lemma transversality}, continuous on $\cL$. It follows that \textit{(d)} implies \textit{(a)}.

\smallskip

We conclude the proof of Theorem~\ref{main theorem} by observing that any continuous extension of $\rho_{\cA,S}$ must be unique because $\cS$ is dense in $\cL$.

%%%%%%%%%%%%%%%%%%%%%%%%%%%%%%%%%%%%%%%%%%%%%%%%%%%

\section{Extension of risk measures on $L^p$ spaces}

We now apply Theorem~\ref{main theorem} to risk measures of the form $\rho_{\cA,S}$ on $L^p$ spaces when the underlying acceptance set $\cA$ is convex and law-invariant. Recall that a set $\cA\subset L^p$ is called \textit{law-invariant} if $X\in\cA$ whenever $X\sim Y$ for some $Y\in\cA$. Here, we write $X\sim Y$ to indicate that $X$ and $Y$ have the same law. Similarly, a map $f:L^p\to\overline{\R}$ is said to be \textit{law-invariant} if $f(X)=f(Y)$ whenever $X\sim Y$.

\medskip

\begin{remark}
\label{remark on law invariant sets}
Recall that every closed, convex, law-invariant set $\cA\subset L^\infty$ is also closed with respect to the $\sigma(L^\infty,L^p)$-topology for any $1\leq p\leq\infty$. This property was first shown in the context of standard probability spaces by Jouini, Schachermayer and Touzi, see Remark 4.4 in \cite{JouiniSchachermayerTouzi}, and was later extended to general nonatomic spaces by Svindland, see Proposition 1.2 in \cite{Svindland2010}.
\end{remark}

\smallskip

\begin{remark}
\label{lack law invariance risk measure}
Consider a law-invariant acceptance set $\cA\subset L^\infty$ and a traded asset $S=(S_0,S_T)$. It is immediate to see that the risk measure $\rho_{\cA,S}$ is also law-invariant whenever the payoff $S_T$ is deterministic. In particular, the cash-additive risk measure $\rho_\cA$ is always law-invariant if $\cA$ is law-invariant. However, this need not be the case if $S_T$ is genuinely random, as illustrated by the following important example.

\begin{enumerate}
	\item The \textit{Value-at-Risk} and \textit{Tail Value-at-Risk} of $X\in L^\infty$ at the level $0<\alpha<1$ are defined, respectively, by
\begin{equation}
\VaR_\alpha(X) := \inf\{m\in\R \,; \ \probp(X+m<0)\leq\alpha\}\,,
\end{equation}
\begin{equation}
\TVaR_\alpha(X) := \frac{1}{\alpha}\int^\alpha_0\VaR_\beta(X)d\beta\,.
\end{equation}
As is well-known, e.g. \cite{FoellmerSchied2011}, both are law-invariant, cash-additive risk measures. Moreover, $\TVaR_\alpha$ is always coherent while $\VaR_\alpha$ is not convex in general. In particular, we can consider the law-invariant acceptance set based on Tail Value-at-Risk at level $\alpha$
\begin{equation}
\cA^\alpha := \left\{X\in L^\infty \,; \ \TVaR_\alpha(X)\leq0\right\}\,.
\end{equation}
We claim that $\rho_{\cA^\alpha,S}$ is never law-invariant unless $S_T$ is deterministic, in which case $\rho_{\cA^\alpha,S}$ is just a multiple of $\TVaR_\alpha$.

\smallskip

To see this, assume $S_T$ is not deterministic so that there exist $\gamma_2>\gamma_1>0$ for which $\probp(S_T\leq\gamma_1)>0$ and $\probp(S_T\geq\gamma_2)>0$. Since $(\Omega,\cF,\probp)$ is nonatomic, we can find measurable sets $A\subset\{S_T\leq\gamma_1\}$ and $B\subset\{S_T\geq\gamma_2\}$ satisfying $\probp(A)=\probp(B)=p$ with $0<p<1-\alpha$. Set now $C=(A\cup B)^c$ and note that $\probp(C)=1-2p$. For $-\gamma_2<\lambda<-\gamma_1$ we define
\begin{equation}
X := \lambda1_A-S_T1_C \ \ \ \mbox{and} \ \ \ Y := \lambda1_B-S_T1_C\,.
\end{equation}
Then clearly $X\sim Y$. We now show that $\rho_{\cA^\alpha,S}(X)>S_0\geq\rho_{\cA^\alpha,S}(Y)$, implying that $\rho_{\cA^\alpha,S}$ is not law-invariant.

\smallskip

Indeed, note first that $Y+S_T\geq0$ so that $\rho_{\cA^\alpha,S}(Y)\leq S_0$. Now take $m<0$. Then
\begin{equation}
\probp(X+S_T+m<0)\geq\probp(A)+\probp(C)=1-p>\alpha\,,
\end{equation}
implying $\VaR_\beta(X+S_T)\geq0$ for all $0<\beta\leq\alpha$ and, hence, $\TVaR_\alpha(X+S_T)\geq0$. Moreover, if $0\leq m<-\lambda-\gamma_1$ then $\probp(X+S_T+m<0)=\probp(A)=p$, and therefore $\VaR_\beta(X+S_T)\geq-\lambda-\gamma_1>0$ whenever $0<\beta<p$. It follows that $\TVaR_\alpha(X+S_T)>0$, showing that $\rho_{\cA^\alpha,S}(X)>S_0$ as claimed.

  \item Sometimes $\rho_{\cA,S}$ is law-invariant even if $S_T$ is not deterministic. For example, consider the law-invariant acceptance set
\begin{equation}
\cA := \{X\in L^\infty \,; \ \E[X]\geq\alpha\}
\end{equation}
where $\alpha\in\R$. Since $\rho_{\cA,S}(X)=\frac{S_0}{\E[S_T]}(\alpha-\E[X])$ for any $X\in L^\infty$, the risk measure $\rho_{\cA,S}$ is law-invariant regardless of the choice of the traded asset $S$.
\end{enumerate}
\end{remark}

\medskip

The following result provides a general extension result for risk measures associated to convex, law-invariant acceptance sets. If $\cA\subset L^\infty$, we denote by $\Closure_p(\cA)$ the closure of $\cA$ in $L^p$, $1\leq p<\infty$. Note that every finite-valued risk measure $\rho_{\cA,S}$ on $L^p$, for $1\leq p\leq\infty$, is automatically continuous when $\cA$ is convex by Theorem 1 in \cite{BiaginiFrittelli2009}.

\begin{theorem}
\label{theorem on Lp}
Let $\cA\subset L^\infty$ be a convex, law-invariant acceptance set and consider a traded asset $S=(S_0,S_T)$ with $S_T\in L^\infty$. Assume $\rho_{\cA,S}$ is finite-valued and, hence, continuous on $L^\infty$. For every $1\leq p<\infty$, the following statements are equivalent:
\begin{enumerate}[(a)]
	\item $\rho_{\cA,S}$ can be extended to a finite-valued and, hence, continuous risk measure on $L^p$;
	\item if $(X_n)\subset L^\infty$ and $X_n\to0$ in $L^p$, then $\rho_{\cA,S}(X_n)\to\rho_{\cA,S}(0)$;
	\item $\Closure_\infty(\cA)$ has nonempty interior with respect to the $L^p$-topology;
	\item $\Closure_p(\cA)$ has nonempty interior in $L^p$.
\end{enumerate}
In this case, the extension is unique and is given by $\rho_{\Closure_p(\cA),S}$.
\end{theorem}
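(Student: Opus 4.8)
The plan is to obtain Theorem~\ref{theorem on Lp} as a specialization of the abstract result, Theorem~\ref{main theorem}, taking $\cS=L^\infty$ and $\cL=L^p$, so that $\cL'=L^{p'}$ and the weak topology $\sigma(\cS,\cL')$ becomes $\sigma(L^\infty,L^{p'})$. The difficulty is that Theorem~\ref{main theorem} requires a $\sigma(\cS,\cL')$-closed acceptance set, whereas $\cA$ is only assumed convex and law-invariant. I would therefore pass to the $L^\infty$-closure $\cC:=\Closure_\infty(\cA)$ and check that it fits the hypotheses: $\cC$ is a convex acceptance set, it stays law-invariant (law invariance being preserved under $\|\cdot\|_\infty$-closure, since in a nonatomic space it amounts to invariance under the isometric action of the measure-preserving automorphisms), and, being $L^\infty$-closed, convex and law-invariant, it is $\sigma(L^\infty,L^{p'})$-closed by Remark~\ref{remark on law invariant sets}. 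This is exactly the closedness required with $\cL'=L^{p'}$.

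The key step is that this replacement leaves the risk measure unchanged, i.e.\ $\rho_{\cA,S}=\rho_{\cC,S}$ on $L^\infty$. The inequality $\rho_{\cC,S}\le\rho_{\cA,S}$ is immediate from $\cA\subseteq\cC$. For the converse I would exploit that each $\psi\in\cL'$ is $\|\cdot\|_\infty$-continuous, so the support function is insensitive to the $L^\infty$-closure, $\sigma_\cA=\sigma_\cC$ on $\cL'$ (cf.\ Remark~\ref{remark on application of lemma}). Being finite-valued and continuous, $\rho_{\cA,S}$ is proper, convex and lower semicontinuous; its Fenchel conjugate equals $-\sigma_\cA$ on $\cL'_{+,S}$ and $+\infty$ otherwise (a computation that does not use closedness of $\cA$, as $\sigma_\cA$ depends only on the closure), so Fenchel--Moreau gives $\rho_{\cA,S}(X)=\sup_{\psi\in\cL'_{+,S}}\{\sigma_\cA(\psi)-\psi(X)\}$. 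In particular some $\psi\in\cL'_{+,S}$ lies in $B(\cA)=B(\cC)$, so Lemma~\ref{lemma dual repr} applies to the closed convex set $\cC$ and yields $\rho_{\cC,S}(X)=\sup_{\psi\in\cL'_{+,S}}\{\sigma_\cC(\psi)-\psi(X)\}$. Since $\sigma_\cA=\sigma_\cC$, the two suprema coincide. This also shows $\rho_{\cC,S}$ never takes $-\infty$, hence is finite-valued and, by Lemma~\ref{lemma transversality}(ii), continuous on $L^\infty$, so Theorem~\ref{main theorem} does apply to $\cC$. I expect this identification to be the main obstacle, precisely because it is where the possible non-closedness of $\cA$ is absorbed.

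It then remains to match the four conditions of Theorem~\ref{main theorem}, stated for $\cC$, with (a)--(d) here. Because $\rho_{\cA,S}=\rho_{\cC,S}$, condition (a) of Theorem~\ref{main theorem} is condition (a) verbatim. For (b), the relative $L^p$-topology on $L^\infty$ is metrizable (it is induced by $\|\cdot\|_p$), so continuity of $\rho_{\cC,S}$ at $0$ in the $\cL$-topology is equivalent to sequential continuity along sequences $(X_n)\subset L^\infty$ with $X_n\to0$ in $L^p$, which is exactly the statement in (b). Condition (c) of Theorem~\ref{main theorem} reads that $\cC=\Closure_\infty(\cA)$ has nonempty $L^p$-interior, i.e.\ (c) here. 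Condition (d) reads that $\Closure_p(\cC)$ has nonempty $L^p$-interior; since $\cA\subseteq\cC\subseteq\Closure_p(\cA)$ (as $\|\cdot\|_p\le\|\cdot\|_\infty$), we have $\Closure_p(\cC)=\Closure_p(\cA)$, so this is (d) here.

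Finally, Theorem~\ref{main theorem} gives uniqueness of the extension and identifies it as $\rho_{\Closure_\cL(\cC),S}=\rho_{\Closure_p(\cC),S}=\rho_{\Closure_p(\cA),S}$, as claimed. The only genuinely delicate points are the preservation of law invariance under $L^\infty$-closure, needed so that Remark~\ref{remark on law invariant sets} applies, and the identity $\rho_{\cA,S}=\rho_{\cC,S}$; everything else is a translation between the abstract conditions and their $L^p$ incarnations.
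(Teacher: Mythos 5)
Your overall route is the paper's own: replace $\cA$ by $\cC:=\Closure_\infty(\cA)$, use law invariance together with Remark~\ref{remark on law invariant sets} to obtain $\sigma(L^\infty,L^{p'})$-closedness, establish $\rho_{\cA,S}=\rho_{\cC,S}$, and then translate the four conditions of Theorem~\ref{main theorem}; your translation steps (metrizability of the relative $L^p$-topology for (b), and $\Closure_p(\cC)=\Closure_p(\cA)$ for (d)) are correct. The gap is in your justification of the key identity $\rho_{\cA,S}=\rho_{\cC,S}$. You argue that, being finite-valued and norm-continuous, $\rho_{\cA,S}$ is proper, convex and lower semicontinuous, so Fenchel--Moreau yields $\rho_{\cA,S}(X)=\sup_{\psi\in\cL'_{+,S}}\{\sigma_\cA(\psi)-\psi(X)\}$ with $\cL'=L^{p'}$. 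But norm continuity only gives lower semicontinuity for the \emph{norm} topology of $L^\infty$, and Fenchel--Moreau for that topology produces a supremum over the norm dual of $L^\infty$, which is a space of finitely additive measures strictly larger than $L^{p'}$. To cut the supremum down to functionals in $L^{p'}_{+,S}$ you need $\sigma(L^\infty,L^{p'})$-lower semicontinuity of $\rho_{\cA,S}$, and that is precisely what you cannot invoke at this stage: $\rho_{\cA,S}$ is in general \emph{not} law-invariant when $S_T$ is random (the paper's Remark~\ref{lack law invariance risk measure} exhibits a counterexample), so the Jouini--Schachermayer--Touzi/Svindland result cannot be applied to it directly; and identifying its sublevel sets with translates of the weakly closed set $\cC$ already presupposes the identity $\rho_{\cA,S}=\rho_{\cC,S}$ you are trying to prove. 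As written, the argument is circular.

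The fix is elementary and is exactly what the paper does: the identity $\rho_{\cA,S}=\rho_{\Closure_\infty(\cA),S}$ is a purely norm-topological fact about continuous risk measures (Lemma 2.5 in \cite{FarkasKochMunari2013a}), requiring no duality at all; duality (Lemma~\ref{lemma dual repr}) then enters only inside the proof of Theorem~\ref{main theorem}, applied to the convex, $\sigma(L^\infty,L^{p'})$-closed set $\cC$, whose finite-valuedness hypothesis is supplied by the identity. A secondary remark: your argument that $\cC$ stays law-invariant (via density of orbits of measure-preserving automorphisms) is a correct idea but heavier than needed, and the orbit-density statement requires care on general, non-standard nonatomic spaces; the paper gets law invariance of $\cC$ for free by observing that $\Closure_\infty(\cA)=\{X\in L^\infty \,; \ \rho_\cA(X)\le 0\}$ is a sublevel set of the cash-additive risk measure $\rho_\cA$, which is law-invariant precisely because the cash payoff $1_\Omega$ is deterministic.
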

\begin{proof}
Since $\rho_{\cA,S}$ is assumed to be continuous on $L^\infty$, we have $\rho_{\cA,S}=\rho_{\Closure_\infty(\cA),S}$ by Lemma 2.5 in \cite{FarkasKochMunari2013a}. Note that $\Closure_\infty(\cA)$ is still convex and law-invariant. Indeed, $\rho_\cA$ is law-invariant and $\Closure_\infty(\cA)$ consists of all $X\in L^\infty$ such that $\rho_{\cA}(X)\leq0$. As a consequence, the theorem follows from Theorem~\ref{main theorem} and Remark~\ref{remark on law invariant sets}.
\end{proof}

\medskip

As a consequence of Theorem~\ref{theorem on Lp}, it is natural to define the index of finiteness of a risk measure $\rho_{\cA,S}$ as follows:

\begin{definition}
Let $\cA\subset L^\infty$ be a convex, law-invariant acceptance set and consider a traded asset $S=(S_0,S_T)$ with $S_T\in L^\infty$. If $\rho_{\cA,S}$ is finite-valued on $L^\infty$, the {\em index of finiteness} of $\rho_{\cA,S}$ is defined as
\begin{equation}
\label{index definition}
\Indexfin(\rho_{\cA,S}):=\inf\left\{p\in[1,\infty) \,; \ \Closure_p(\cA) \ \mbox{has nonempty interior in} \ L^p\right\}\,.
\end{equation}
\end{definition}

If the infimum in \eqref{index definition} is attained and we set $p:=\Indexfin(\rho_{\cA,S})$, then $L^p$ is the largest space for which there exists a finite-valued and, hence, continuous extension of $\rho_{\cA,S}$. Therefore, if we are interested in preserving finiteness and continuity properties of a risk measure, the space $L^p$ is to be considered the canonical model space for $\rho_{\cA,S}$. From this perspective, the canonical model space for convex, law-invariant (cash-additive) risk measures is not always $L^1$, but will depend on the individual risk measure. As illustrated below, for every $1\leq p\leq\infty$ we can find convex, law-invariant, cash-additive risk measures having index of finiteness equal to $p$. In particular, there are risk measures of this type which cannot be extended beyond $L^\infty$ in a way that preserves finiteness and continuity.

\smallskip

\begin{remark}
\label{remark index of fin}
Note that the existence of a finite-valued, continuous extension of a risk measure $\rho_{\cA,S}$ satisfying the assumptions of Theorem~\ref{theorem on Lp} does not depend on the properties of the payoff $S_T$, but only on the topological properties of the acceptance set $\cA$. An important consequence is that, if $\rho_{\cA,S}$ is finite-valued on $L^\infty$, then
\begin{equation}
\Indexfin(\rho_{\cA,S})=\Indexfin(\rho_{\cA})\,.
\end{equation}
However, the finiteness of $\rho_{\cA,S}$ on $L^\infty$ does depend on the interplay between the acceptance set and the traded asset, as illustrated by our next examples and extensively documented in \cite{FarkasKochMunari2013a}.
\end{remark}

%%%%%%%%%%%%%%%%%%%%%%%%%%%%%%%%%%%%%%

\section{Qualitative robustness}

In this section we recall the notion of qualitative robustness introduced by Kr\"{a}tschmer, Schied, and Z\"{a}hle in \cite{KraetschmerSchiedZaehle2012} and we discuss the link with our previous results.

\medskip

Consider a law-invariant acceptance set $\cA\subset L^\infty$ and its associated cash-additive risk measure $\rho_\cA$ which is, then, also law-invariant. If we denote by $\probp_X$ the law of $X$, i.e. $\probp_X(A):=\probp(X\in A)$ for all Borel sets $A\subset\R$, and set
\begin{equation}
\cM_\infty := \{\probp_X \,; \ X\in L^\infty\}\,,
\end{equation}
we can define a functional $\cR_\cA:\cM_\infty\to\R$ by
\begin{equation}
\cR_\cA(\probp_X) := \rho_\cA(X)\,.
\end{equation}

\medskip

The capital position $X$ of a financial institution is often estimated through a sequence of historical observations $x_1,\dots,x_N\in\R$, and the quantity $\cR_\cA(m)$, where $m$ denotes the empirical distribution of these observations, is used as a natural proxy for $\rho_\cA(X)$. The importance of the robustness properties of the operator $\cR_\cA$ were discussed in detail by Cont, Deguest, and Scandolo in \cite{ContDeguestScandolo2010}. Based on that paper, a refined notion of qualitative robustness has been recently proposed in \cite{KraetschmerSchiedZaehle2012} and further studied in \cite{KraetschmerSchiedZaehle2013}.

\medskip

Let $\cM$ denote the set of (Borel) probability measures over $\R$. To any $\mu\in\cM$ which is not a Dirac measure, we can associate a nonatomic probability space $(\Omega^{\mu},\cF^{\mu},\probp^{\mu})$ supporting a sequence $(X_n)$ of i.i.d. random variables having $\mu$ as their common law, see for instance Section 11.4 in \cite{Dudley2004}. For $n\in\N$, the empirical distribution of $X_1,\dots,X_n$ is the map $m^{\mu}_n:\Omega^\mu\to\cM_\infty$ defined by
\begin{equation}
m^\mu_n(\omega):=\frac{1}{n}\sum^{n}_{i=1}\delta_{X_i(\omega)} \ \ \ \mbox{for} \ \omega\in\Omega^\mu\,,
\end{equation}
where $\delta_{X_i(\omega)}$ denotes the standard Dirac measure associated to the singleton $\{X_i(\omega)\}$. Moreover, we can consider the random variable $\cR_\cA(m^\mu_n)$ given by
\begin{equation}
\cR_\cA(m^\mu_n)(\omega) := \cR_\cA(m^\mu_n(\omega)) \ \ \ \mbox{for} \ \omega\in\Omega^\mu\,.
\end{equation}

\smallskip

The following notion of qualitative robustness is a generalization of the classical notion introduced by Hampel in \cite{Hampel1971}. For $1\leq p<\infty$ define $\psi_p(x):=\frac{1}{p}\left|x\right|^p$, $x\in\R$, and recall from \cite{KraetschmerSchiedZaehle2013} that a set $\cN\subset\cM$ is said to be \textit{uniformly $p$-integrating} if
\begin{equation}
\lim_{M\to\infty}\sup_{\mu\in\cN}\int_{\{\psi_p\geq M\}}\psi_p(x)d\mu(x)=0\,.
\end{equation}

\begin{definition}
The functional $\cR_\cA$ is said to be {\em $p$-robust on $\cM_\infty$}, $1\leq p<\infty$, if for any uniformly $p$-integrating set $\cN\subset\cM_\infty$, $\mu\in\cN$ and $\e>0$ there exist $\delta>0$ and $n_0\in\N$ such that
\begin{equation}
\label{robustness formula 1}
d_P(\mu,\nu)+\left|\int_\R\psi_p(x)d\mu(x)-\int_\R\psi_p(x)d\nu(x)\right|\leq\delta
\end{equation}
implies
\begin{equation}
\label{robustness formula 2}
d_P\left(\probp^\mu_{\cR_\cA(m^\mu_n)},\probp^\nu_{\cR_\cA(m^\nu_n)}\right)\leq\e
\end{equation}
for $\nu\in\cN$ and $n\geq n_0$, where $d_P$ denotes the usual Prohorov metric over $\cM$.
\end{definition}

Hence, if $\cR_\cA$ is $p$-robust on $\cM_\infty$, then a suitable small change in the law of the data entails an arbitrarily small change in the law of the corresponding estimators.

\smallskip

\begin{remark}
As discussed in \cite{KraetschmerSchiedZaehle2012} and \cite{KraetschmerSchiedZaehle2013}, the choice to add an additional term to the Prohorov metric in \eqref{robustness formula 1}, as opposed to the classical framework developed by Hampel in \cite{Hampel1971}, has the main advantage of making $\cR_\cA(\mu)$ sensitive to the tail behaviour of $\mu$. Indeed, under the Prohorov metric, or equivalently under any metric inducing the weak topology on $\cM$, like the L\'{e}vy metric, two distributions $\mu$ and $\nu$ may possess a different tail behaviour but be rather close in metric terms. In this case, qualitative robustness would essentially prevent $\cR_\cA$ from discriminating across different tail profiles. For more details about the Prohorov and L\'{e}vy metric we refer to Section 11.3 in \cite{Dudley2004}.
\end{remark}

\medskip

Based on \cite{KraetschmerSchiedZaehle2012}, the same authors introduced in \cite{KraetschmerSchiedZaehle2013} the \textit{index of qualitative robustness} for a risk measure $\rho_\cA$ defined by
\begin{equation}
\Indexiqr(\rho_\cA) := \left(\inf\{p\in[1,\infty) \,; \ \cR_\cA \ \mbox{is} \ p\mbox{-robust on} \ \cM_\infty\}\right)^{-1}\,.
\end{equation}

By combining Theorem 2.16 in \cite{KraetschmerSchiedZaehle2013} and our previous Theorem \ref{theorem on Lp} we obtain the following interesting result relating the qualitative robustness of the operator $\cR_\cA$ to the topological properties of the acceptance set $\cA$.

\begin{theorem}
Assume $\cA\subset L^\infty$ is a convex, law-invariant acceptance set, and let $1\leq p<\infty$. The following statements are equivalent:
\begin{enumerate}[(a)]
	\item $\cR_\cA$ is $p$-robust on $\cM_\infty$;
	\item $\Closure_p(\cA)$ has nonempty interior in $L^p$.
\end{enumerate}
Moreover, we have
\begin{equation}
\label{index robustness}
\Indexiqr(\rho_\cA)=\frac{1}{\Indexfin(\rho_\cA)}\,.
\end{equation}
\end{theorem}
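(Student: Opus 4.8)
The plan is to combine the external characterization of $p$-robustness supplied by Theorem 2.16 in \cite{KraetschmerSchiedZaehle2013} with the intrinsic topological criterion already established in Theorem~\ref{theorem on Lp}. First I would observe that the standing hypotheses guarantee that $\cR_\cA$ is a well-defined $\R$-valued functional on $\cM_\infty$: since $\cA$ is a \emph{proper} acceptance set satisfying $\cA+\cX_+\subset\cA$, the cash-additive risk measure $\rho_\cA$ is automatically finite-valued on $L^\infty$. Indeed, fixing $A\in\cA$ gives $\rho_\cA(X)\leq\|A-X\|_\infty<\infty$, while $\rho_\cA(X)=-\infty$ for some $X$ would force the whole line $X+\R$ into $\cA$ and hence, by monotonicity, $\cA=L^\infty$, contradicting properness. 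With this in hand, Theorem 2.16 in \cite{KraetschmerSchiedZaehle2013} identifies $p$-robustness of $\cR_\cA$ on $\cM_\infty$ with the existence of a finite-valued, continuous extension of $\rho_\cA$ from $L^\infty$ to $L^p$. Thus statement \textit{(a)} is equivalent to the extendability of $\rho_\cA$ to a finite-valued, continuous risk measure on $L^p$.

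Next I would apply Theorem~\ref{theorem on Lp} with $S$ taken to be the cash asset $S=(1,1_\Omega)$, which is legitimate because $\cA$ is convex and law-invariant and $\rho_\cA=\rho_{\cA,S}$ is finite-valued (hence continuous) on $L^\infty$. The equivalence of items \textit{(a)} and \textit{(d)} there states precisely that $\rho_\cA$ admits a finite-valued, continuous extension to $L^p$ if and only if $\Closure_p(\cA)$ has nonempty interior in $L^p$. Chaining the two equivalences then yields at once that \textit{(a)} and \textit{(b)} of the present statement are equivalent, which is the entire content of the first assertion.

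It remains to deduce the index identity \eqref{index robustness}. By the equivalence just proved, the two index-defining sets coincide:
\begin{equation}
\{p\in[1,\infty) \,; \ \cR_\cA \ \text{is} \ p\text{-robust on} \ \cM_\infty\}=\{p\in[1,\infty) \,; \ \Closure_p(\cA) \ \text{has nonempty interior in} \ L^p\}\,.
\end{equation}
Equality of the sets gives equality of their infima, so that $\inf\{p \,; \ \cR_\cA \text{ is } p\text{-robust}\}=\Indexfin(\rho_\cA)$, and taking reciprocals in the definition of $\Indexiqr(\rho_\cA)$ delivers exactly \eqref{index robustness}. I would add the elementary remark that this common set is in fact upward closed: if $\rho_\cA$ extends finitely and continuously to $L^p$ and $p\leq q<\infty$, then, since $L^q\subset L^p$ with $\|\cdot\|_p\leq\|\cdot\|_q$ on a probability space, the restriction of the extension to $L^q$ is again finite-valued and continuous. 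This shows the threshold $\Indexfin(\rho_\cA)$ is a genuine left endpoint, though it is not logically needed for the bare identity.

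The only genuinely delicate point is the faithful invocation of Theorem 2.16 in \cite{KraetschmerSchiedZaehle2013}: one must verify that the notion of $p$-robustness employed there, formulated through the $\psi_p$-weak topology on laws, corresponds precisely to the ``finite-valued, continuous extension to $L^p$'' formulation used here, and that the hypotheses under which that theorem is stated (convexity, law invariance, and finiteness of $\rho_\cA$ on $L^\infty$) are met in our setting. Once this matching of notions is confirmed, the remainder is a direct chaining of equivalences together with the monotonicity observation above.
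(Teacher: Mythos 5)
Your proposal is correct and takes essentially the same route as the paper, whose entire proof is the observation that the result follows by combining Theorem 2.16 of \cite{KraetschmerSchiedZaehle2013} (identifying $p$-robustness of $\cR_\cA$ with the existence of a finite-valued, continuous extension of $\rho_\cA$ to $L^p$) with the equivalence \textit{(a)}$\Leftrightarrow$\textit{(d)} of Theorem~\ref{theorem on Lp}. The details you supply --- the automatic finiteness and continuity of the cash-additive $\rho_\cA$ on $L^\infty$ needed to invoke Theorem~\ref{theorem on Lp}, and the equality of the two index-defining sets yielding \eqref{index robustness} --- are exactly the steps the paper leaves implicit.
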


\medskip

In the final sections we compute the index of finiteness of several risk measures. As a consequence of the above theorem, these examples turn out to be important also from the perspective of qualitative robustness.

%%%%%%%%%%%%%%%%%%%%%%%%%%%%%%%

\section{Risk measures based on utility functions}

In this section we analyse the index of finiteness of risk measures based on expected utility. Note that, even though such risk measures are treated in \cite{KraetschmerSchiedZaehle2013}, no results concerning their statistical robustness are proved there.

\medskip

Recall that a nonconstant function $u:\R\to\R\cup\{-\infty\}$ is said to be a \textit{utility function} if $u$ is increasing and concave. This implies that $u$ is unbounded from below. In the sequel, we assume that $u$ denotes a utility function which is \textit{bounded from above}.

\medskip

For every $1\leq p\le\infty$ and a level $\alpha\in\R$ we set
\begin{equation}
\cA_u^p:=\{X\in L^p \,; \ \E[u(X)]\geq\alpha\}\,.
\end{equation}

Clearly, this set is nonempty if and only if $u(x)\ge \alpha$ for some $x\in\R$, which we assume from now on. Moreover, in that case $\cA_u^p$ is a convex, law-invariant acceptance set.

\medskip

We start by providing a characterization of when risk measures of the form $\rho_{\cA_u^\infty,S}$ are finite-valued on $L^\infty$.

\begin{proposition}
Let $S=(S_0,S_T)$ be a traded asset with $S_T\in L^\infty$.
\begin{enumerate}[(i)]
	\item Assume $u$ never attains the value $-\infty$ and $u(x)>\alpha$ for some $x\in\R$. Then the following are equivalent:
\begin{enumerate}[(a)]
	\item $\rho_{\cA_u^\infty,S}$ is finite-valued and, hence, continuous on $L^\infty$;
	\item $\probp(S_T=0)=0$.
\end{enumerate}
	\item Assume $u$ attains the value $-\infty$ or $u(x)\leq\alpha$ for all $x\in\R$. Then the following are equivalent:
\begin{enumerate}[(a)]
	\item $\rho_{\cA_u^\infty,S}$ is finite-valued and, hence, continuous on $L^\infty$;
	\item $\probp(S_T\geq\e)=1$ for some $\e>0$.
\end{enumerate}	
\end{enumerate}
\end{proposition}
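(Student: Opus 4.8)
The plan is to argue directly from the definition
$$\rho_{\cA_u^\infty,S}(X)=\inf\Big\{m\in\R\,;\ \E\big[u\big(X+\tfrac{m}{S_0}S_T\big)\big]\geq\alpha\Big\},$$
using that $\rho_{\cA_u^\infty,S}$ is finite-valued on $L^\infty$ precisely when it never takes the values $+\infty$ or $-\infty$. First I would dispose of the value $-\infty$ uniformly, in all cases at once. Since $S_T$ is a nonzero positive payoff, $\probp(S_T>0)>0$; as $m\to-\infty$ the integrand $u(X+\tfrac{m}{S_0}S_T)$ decreases to $-\infty$ on $\{S_T>0\}$ while staying bounded above by the finite constant $M:=\sup u$. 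A reverse Fatou argument (dominated from above by $M$, which is integrable on a probability space) then gives $\E[u(X+\tfrac{m}{S_0}S_T)]\to-\infty$, so some $m$ is not acceptable and $\rho_{\cA_u^\infty,S}(X)>-\infty$ for every $X$. Thus in both cases the entire content of finiteness is the absence of the value $+\infty$, i.e. the existence, for each $X\in L^\infty$, of at least one acceptable $m$.

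For case (i), $u$ is real-valued and $M=\sup u\geq u(x)>\alpha$ for the distinguished point, so $M>\alpha$. For (b)$\Rightarrow$(a), assume $\probp(S_T=0)=0$, so $S_T>0$ almost surely; then for fixed $X\in L^\infty$ the integrand increases to $M$ almost surely as $m\to\infty$, is bounded below by the finite constant $u(-\|X\|_\infty)$, and monotone convergence gives $\E[u(X+\tfrac{m}{S_0}S_T)]\to M>\alpha$, so large $m$ is acceptable and $\rho_{\cA_u^\infty,S}(X)<\infty$. For (a)$\Rightarrow$(b) I contrapose: if $\probp(S_T=0)=:q>0$, put $X=-K$ on $\{S_T=0\}$ and $X=0$ elsewhere; since $\tfrac{m}{S_0}S_T$ vanishes on $\{S_T=0\}$, the $m$-independent estimate $\E[u(X+\tfrac{m}{S_0}S_T)]\leq u(-K)q+M(1-q)$ holds, and this is $<\alpha$ once $K$ is large. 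Then no $m$ is acceptable and $\rho_{\cA_u^\infty,S}(X)=+\infty$, so finiteness fails.

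For case (ii) the mechanism is different, and the key preliminary step is to isolate a threshold $t\in\R$ such that $\E[u(Y)]\geq\alpha$ forces $Y\geq t$ almost surely. If $u$ attains $-\infty$, I take $t$ to be the right endpoint of the interval on which $u\equiv-\infty$: if $Y$ is not $\geq t$ almost surely, the integrand equals $-\infty$ on a set of positive measure and, being bounded above by $M<\infty$, has integral $-\infty<\alpha$. If instead $\sup u\leq\alpha$, then the standing assumption $u(x)\geq\alpha$ for some $x$ forces $\sup u=\alpha$ to be attained, and I take $t=\inf\{x\,;\,u(x)=\alpha\}$: since $u\leq\alpha$, the condition $\E[u(Y)]\geq\alpha$ forces $u(Y)=\alpha$ almost surely, i.e. $Y\geq t$ almost surely. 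Granting the threshold, (b)$\Rightarrow$(a) is immediate: if $S_T\geq\e$ almost surely, then for $m$ large $X+\tfrac{m}{S_0}S_T\geq-\|X\|_\infty+\tfrac{m\e}{S_0}\geq x_0$ almost surely, where $x_0$ is a point with $u(x_0)\geq\alpha$ furnished by the standing assumption, so the integrand is $\geq\alpha$ and $\rho_{\cA_u^\infty,S}(X)<\infty$. Conversely, for (a)$\Rightarrow$(b) I contrapose: if $\probp(S_T\geq\e)<1$ for every $\e>0$, then with the constant $X=t-1$ one has, for each $m>0$, a positive-measure set on which $\tfrac{m}{S_0}S_T<1$ and hence $X+\tfrac{m}{S_0}S_T<t$ (and $m\leq0$ is worse), so this position is not $\geq t$ almost surely and is not acceptable; thus $\rho_{\cA_u^\infty,S}(t-1)=+\infty$.

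The main obstacle, and the place demanding the most care, is the threshold step in case (ii): pinning down the correct endpoint when $u$ jumps to $-\infty$, including the single boundary point where $u$ may be finite or $-\infty$, which I sidestep by using strict inequalities $X+\tfrac{m}{S_0}S_T<t$ in the construction so that the endpoint value never enters; and verifying that $\sup u=\alpha$ is genuinely attained under the nonemptiness assumption so that $t$ is well defined. The limiting arguments—monotone convergence for the $+\infty$ side of case (i) and reverse Fatou for the universally valid $-\infty$ side—are routine once one records that $u\leq M<\infty$ supplies an integrable upper bound and that boundedness of $X$ supplies an integrable lower bound. The only remaining check is that each constructed extremal position lies in $L^\infty$, which is clear since all of them are bounded.
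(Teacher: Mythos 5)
Your proof is correct, and its overall skeleton is the paper's: first exclude the value $-\infty$ using that $u$ is bounded above and unbounded below, then handle each implication by exhibiting essentially the same extremal positions. Two execution differences are worth noting. For the implication (b)$\Rightarrow$(a) in part (ii), the paper does not argue directly: it invokes Proposition 3.1 of \cite{FarkasKochMunari2013a}, since $\probp(S_T\geq\e)=1$ makes $S_T$ an interior point of the positive cone; your elementary computation (push $X+\frac{m}{S_0}S_T$ above a point $x_0$ with $u(x_0)\geq\alpha$) is self-contained and just as short. For (a)$\Rightarrow$(b) in part (ii), the paper runs two separate ad hoc arguments for the sub-cases ``$u$ attains $-\infty$'' and ``$u\leq\alpha$ everywhere'', whereas your threshold lemma (acceptability of $Y$ forces $Y\geq t$ almost surely, with $t$ the right endpoint of $\{u=-\infty\}$ in one sub-case and $t=\inf\{x\,;\ u(x)=\alpha\}$ in the other) unifies them into a single contrapositive argument built on the position $X=t-1$; this is cleaner and makes it transparent that both sub-cases fail finiteness for the same reason. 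Similarly, your appeals to reverse Fatou and monotone convergence replace the paper's explicit two-term estimates over $\{S_T\geq\gamma\}$ and its complement, which is a matter of taste. The only point you should state explicitly (you only gesture at it) is that $t$ is finite in both sub-cases: $\{u=-\infty\}$ is bounded above because $u$, being nonconstant, is finite somewhere, and $\{u=\alpha\}$ is bounded below because a nonconstant, increasing, concave $u$ is unbounded from below, a fact recorded in the paper.
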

\begin{proof}
First, we show that $\rho_{\cA_u^\infty,S}$ never takes the value $-\infty$. To this end, fix $X\in L^\infty$ and $\gamma>0$ such that $\probp(S_T\geq\gamma)>0$. Then, since $u$ is unbounded from below, we can always find $\lambda>0$ sufficiently large to yield
\begin{equation}
\E[u(X-\lambda S_T)] \leq u(\left\|X\right\|_\infty-\lambda\gamma)\,\probp(S_T\geq\gamma)+\sup_{x\in\R}u(x)\,
\probp(S_T<\gamma) < \alpha\,.
\end{equation}
This implies $X-\lambda S_T\notin\cA_u^\infty$ and, hence, $\rho_{\cA_u^\infty,S}(X)>-\infty$.

\smallskip

To prove~\textit{(i)}, assume first that \textit{(a)} holds so that $\rho_{\cA_u^\infty,S}(-\xi 1_\Omega)<\infty$ for any $\xi>0$. As a result, for every $\xi>0$ there exists $\lambda>0$ such that
\begin{equation}
u(-\xi)\,\probp(S_T=0)+\sup_{x\in\R}u(x)\,\probp(S_T>0) \geq \E[u(-\xi 1_\Omega+\lambda S_T)] \geq \alpha\,.
\end{equation}
Since $u$ is unbounded below, this is only possible if $\probp(S_T=0)=0$, proving \textit{(b)}.

\smallskip

Now assume \textit{(b)} holds and take $X\in L^\infty$. Since $u(x)>\alpha$ for some $x\in\R$ and $\probp(S_T=0)=0$, we can find $\e>0$ sufficiently small to obtain
\begin{equation}
\E\left[u\left(X+\frac{1}{\e^2}S_T\right)\right] \geq u\left(\frac{1}{\e}-\left\|X\right\|_\infty\right)\,\probp(S_T\geq\e)+u(-\left\|X\right\|_\infty)\,\probp(S_T<\e) \geq \alpha\,,
\end{equation}
implying that $\rho_{\cA_u^\infty,S}(X)<\infty$. Hence, \textit{(a)} follows since $\rho_{\cA_u^\infty,S}$ never attains the value $-\infty$.

\smallskip

To prove~\textit{(ii)}, we first show that~\textit{(b)} always implies~\textit{(a)}. Indeed, if~\textit{(b)} holds then $S_T$ is an interior point of $L^\infty$, hence $\rho_{\cA_u^\infty,S}$ is finite-valued by Proposition 3.1 in \cite{FarkasKochMunari2013a}.

\smallskip

Conversely, assume that~\textit{(a)} holds under the condition that $u(-\xi)=-\infty$ for some $\xi>0$. In this case, set $X:=(-\xi-1)1_\Omega$ and note that for every $\lambda>0$ there exists $\e>0$ such that $u(-\xi-1+\lambda\e)=-\infty$. Now, if $\probp(S_T<\e)>0$ for all $\e>0$, this implies
\begin{equation}
\E[u(X+\lambda S_T)] \leq u(-\xi-1+\lambda\e)\,\probp(S_T<\e)+\sup_{x\in\R}u(x)\,\probp(S_T\geq\e) < \alpha\,.
\end{equation}
As a result $\rho_{\cA_u^\infty,S}(X)=\infty$, contradicting~\textit{(a)}. Hence, we must have $\probp(S_T\geq\e)>0$ for some $\e>0$ so that~\textit{(b)} holds.

\smallskip

Finally, assume that~\textit{(a)} holds and $u$ is bounded from above by $\alpha$, and set $x_0:=\inf\{x\in\R \,; \ u(x)=\alpha\}$. Moreover, take $\xi>-x_0$. Since $\rho_{\cA_u,S}(-\xi1_\Omega)<\infty$, there exists $\lambda>0$ such that $\E[u(-\xi1_\Omega+\lambda S_T)]\geq\alpha$. But this is only possible if $-\xi+\lambda S_T\geq x_0$ almost surely, implying that $\probp(S_T\geq\frac{\xi+x_0}{\lambda})=1$. As a consequence~\textit{(b)} holds, concluding the proof.
\end{proof}

\medskip

To study extension properties of risk measures based on expected utility, we first need to investigate the topological structure of the corresponding acceptance sets.

\begin{lemma}
\label{lemma Au closed}
For every $1\leq p\le\infty$, the acceptance set $\cA_u^p$ is closed in $L^p$.
\end{lemma}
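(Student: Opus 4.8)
The plan is to prove closedness directly: take a sequence $(X_n)\subset\cA_u^p$ with $X_n\to X$ in $L^p$ and show that the limit satisfies $\E[u(X)]\ge\alpha$, so that $X\in\cA_u^p$. The standard first move is to replace norm convergence by pointwise convergence: for every $1\le p\le\infty$, $L^p$-convergence yields a subsequence $(X_{n_k})$ with $X_{n_k}\to X$ almost surely, and this subsequence of course still satisfies $\E[u(X_{n_k})]\ge\alpha$. It therefore suffices to pass to the limit along an almost surely convergent sequence.

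The two structural features of $u$ that drive the argument are that $u$ is bounded from above, say by $M:=\sup_{x\in\R}u(x)<\infty$, and that $u$ is upper semicontinuous (for a concave $u$ this is immediate on the interior of its effective domain, and is the natural regularity to impose up to the boundary). Upper semicontinuity gives the pointwise bound $\limsup_k u(X_{n_k})\le u(X)$ almost surely, while the upper bound $M$ renders the nonnegative functions $M-u(X_{n_k})$ accessible to Fatou's lemma.

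Concretely, I would apply Fatou's lemma to $M-u(X_{n_k})\ge0$ to obtain the reverse Fatou inequality $\limsup_k\E[u(X_{n_k})]\le\E[\limsup_k u(X_{n_k})]\le\E[u(X)]$, the last step being upper semicontinuity. Since every term obeys $\E[u(X_{n_k})]\ge\alpha$, the left-hand side is at least $\alpha$, and hence $\E[u(X)]\ge\alpha$, i.e. $X\in\cA_u^p$. The case $p=\infty$ is handled identically (there $L^\infty$-convergence already forces almost everywhere convergence of the whole sequence), so the single argument covers all $1\le p\le\infty$.

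The delicate point, and the only place where the hypotheses on $u$ are genuinely used, is this interchange of limit and expectation. Because a utility function is unbounded below, there is no integrable lower envelope for the $u(X_{n_k})$, so neither dominated convergence nor ordinary Fatou applies to $u(X_{n_k})$ directly; it is precisely the one-sided control "$u$ bounded above" together with upper semicontinuity that licenses the reverse Fatou step. This asymmetry is exactly what makes the superlevel set $\{\,X:\E[u(X)]\ge\alpha\,\}$ closed, and I would flag the boundary behaviour of $u$ as the one spot requiring care.
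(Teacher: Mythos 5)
Your proof is correct and follows essentially the same route as the paper's: extract an almost surely convergent subsequence from the $L^p$-convergent one, then exploit the upper bound on $u$ to apply Fatou's lemma in its reverse form and conclude $\E[u(X)]\geq\limsup_k\E[u(X_{n_k})]\geq\alpha$. The only difference is cosmetic: the paper invokes ``continuity of $u$'' where you more carefully invoke upper semicontinuity, which is indeed the right hypothesis to flag when $u$ is allowed to take the value $-\infty$ at the boundary of its effective domain.
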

\begin{proof}
To prove that $\cA_u^p$ is closed in $L^p$, take a sequence $(X_n)$ in $\cA_u^p$ and assume $X_n\to X$ in $L^p$ as $n\to\infty$. Since $X_n\to X$ almost surely as $n\to\infty$, up to passing to a suitable subsequence, it follows from the continuity of $u$ and by Fatou's lemma, e.g. Lemma 4.3.3 in~\cite{Dudley2004}, that
\begin{equation}
\E[u(X)] = \E\left[\lim u(X_n)\right] \geq \limsup \E[u(X_n)] \geq \alpha\,.
\end{equation}
This shows that $X\in\cA_u^p$ and, hence, that $\cA_u^p$ is closed.
\end{proof}

\medskip

Assume $S=(S_0,S_T)$ is a traded asset with $S_T\in L^\infty_+$ such that $\rho_{\cA,S}$ is finite-valued and fix $1\leq p<\infty$. Then by Theorem~\ref{theorem on Lp} we know that $\rho_{\cA,S}$ can be extended to a finite-valued, continuous risk measure on $L^p$ if and only if $\Closure_p(\cA_u^\infty)$ has nonempty interior in $L^p$. Since, by the above lemma, $\Closure_p(\cA_u^\infty)\subset\cA_u^p$ we infer that if $\cA_u^p$ has empty interior, then $\rho_{\cA,S}$ cannot admit such an extension. The result below provides conditions for $\cA_u^p$ to have empty interior. In particular, condition \textit{(ii)} shows that this may depend on the decay behaviour of the utility function at $-\infty$.

\begin{lemma}
\label{lemma interior utility}
Fix $1\leq p<\infty$ and assume one of the following conditions:
\begin{enumerate}[(i)]
  \item $u(x)\leq\alpha$ for all $x\in\R$;
  \item $\lim_{x\to\infty}\frac{x^p}{u(-x)}=0$.
\end{enumerate}
Then $\cA_u^p$ has empty interior in $L^p$.
\end{lemma}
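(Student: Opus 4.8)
The plan is to show that no point of $\cA_u^p$ can be interior: given an arbitrary $X\in\cA_u^p$ and any $\e>0$, I would construct $Y\in L^p$ with $\|X-Y\|_p<\e$ but $\E[u(Y)]<\alpha$, so that $Y\notin\cA_u^p$. In both cases the perturbation is a downward shift on a small set, $Y:=X-t1_A$ for a suitable level $t>0$ and a set $A$ of small measure on which $X$ is bounded. Since $(\Omega,\cF,\probp)$ is nonatomic and $X\in L^p$ is finite almost surely, I may fix $K$ with $\probp(X\leq K)>0$ and then, for any $0<a\leq\probp(X\leq K)$, find $A\subset\{X\leq K\}$ with $\probp(A)=a$. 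On such an $A$ we have $X\leq K$, so $u(X-t)\leq u(K-t)$, while the cost of the shift is exactly $\|X-Y\|_p=t\,a^{1/p}$.

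For case (i), since $u\leq\alpha$ everywhere and $\E[u(X)]\geq\alpha$, the nonnegative variable $\alpha-u(X)$ has nonpositive mean, which forces $u(X)=\alpha$ almost surely. Fixing $A\subset\{X\leq K\}$ of positive measure and choosing $t$ large enough that $u(K-t)<\alpha$ (possible as $u$ is unbounded below), I get $u(Y)=u(X)=\alpha$ on $A^c$ and $u(Y)\leq u(K-t)<\alpha$ on $A$, so
\begin{equation}
\E[u(Y)]\leq\alpha\,\probp(A^c)+u(K-t)\,\probp(A)<\alpha\,.
\end{equation}
Keeping $t$ fixed and letting $\probp(A)\to0$ makes $\|X-Y\|_p=t\,\probp(A)^{1/p}$ arbitrarily small, so $X$ is not interior.

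For case (ii), write $\bar u:=\sup_{x\in\R}u(x)<\infty$. Bounding $u(Y)$ by $\bar u$ off $A$ and by $u(K-t)$ on $A$ gives, for $t$ large enough that $K-t<0$,
\begin{equation}
\E[u(Y)]\leq\bar u\,\probp(A^c)+u(K-t)\,\probp(A)=\bar u\,\probp(A^c)-|u(K-t)|\,\probp(A)\,.
\end{equation}
The first term stays bounded (by $\max(\bar u,0)$) as $\probp(A)\to0$, so everything hinges on sending $t\to\infty$ together with $\probp(A)=a(t)\to0$ in such a way that the cost $t\,a(t)^{1/p}$ vanishes while the loss $|u(K-t)|\,a(t)$ diverges. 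Hypothesis (ii), rewritten as $R(t):=|u(K-t)|/t^p\to\infty$ (the shift $K$ being harmless in the limit), is precisely what allows this: choosing $a(t):=t^{-p}R(t)^{-1/2}$ yields $t^p a(t)=R(t)^{-1/2}\to0$ and $|u(K-t)|\,a(t)=R(t)^{1/2}\to\infty$. Hence for large $t$ the right-hand side above drops below $\alpha$ while $\|X-Y\|_p=R(t)^{-1/(2p)}\to0$, again showing $X$ is not interior.

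I expect the main obstacle to be exactly this quantitative balancing in case (ii): one must trade the $L^p$-cost $t\,a^{1/p}$ of the downward shift against the utility loss $|u(K-t)|\,a$, and the whole content of condition (ii) is that the super-$p$-polynomial decay of $u$ at $-\infty$ lets the loss win while the cost still tends to zero. The remaining points are routine: that $\sup u$ is finite (as $u$ is bounded above), and that nonatomicity together with $X\in L^p$ supplies sets $A$ of the prescribed shrinking measures inside a region where $X$ is bounded.
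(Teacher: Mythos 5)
Your proof is correct and follows essentially the same route as the paper's: perturb $X$ downward by a constant on a set of small measure, and use the hypothesis on $u$ to make the expected-utility loss beat the $L^p$-cost of the perturbation. The only notable difference is in case (ii), where you bound the utility off the perturbation set by $\sup_{x\in\R} u(x)$ rather than by $u(\left\|X\right\|_\infty)$, which lets you handle unbounded $X\in\cA_u^p$ directly and dispense with the paper's closing density-of-$L^\infty$ step.
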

\begin{proof}
\textit{(i)} Take $X\in\cA_u^p$ and $r>0$. Choose first $\gamma>0$ with
$\probp(\left|X\right|<\gamma)>0$ and then $\xi>0$ such that $u(\gamma -\xi)<\alpha$. Since $(\Omega,\cF,\probp)$ is nonatomic we find $A\subset\{\vert X\vert <\gamma\}$ with $\probp(A)<\frac{r^p}{\xi^p}$. Set now $Y:=(X-\xi)1_A+X1_{A^c}$, and note that $\left\|X-Y\right\|^p_p=\xi^p\probp(A)<r^p$. Moreover,
\begin{equation}
\E[u(Y)] = \E[u(X-\xi)1_A] + \E[u(X)1_{A^c}] \leq u(\gamma-\xi)\,\probp(A)+\alpha\,\probp(A^c) < \alpha\,.
\end{equation}
Hence, in every neighborhood of $X$ there exists some element which does not belong to $\cA_u^p$. Since $X$ was arbitrary, this implies $\cA_u^p$ has empty interior.

\smallskip

\textit{(ii)} Take $X\in\cA_{u}^p\cap L^\infty$ so that $u(\left\|X\right\|_\infty)\geq\E[u(X)]\geq\alpha$, and fix $r>0$. It is easy to see that by assumption we can find a sufficiently large $\xi>0$ such that
\begin{equation}
0 \leq \frac{u(\left\|X\right\|_\infty)-\alpha}{u(\left\|X\right\|_\infty)-u(\left\|X\right\|_\infty-\xi)} < \frac{r^p}{
\xi^p} < 1\,.
\end{equation}
As a consequence, taking $\lambda\in(0,1)$ with
\begin{equation}
\frac{u(\left\|X\right\|_\infty)-\alpha}{u(\left\|X\right\|_\infty)-u(\left\|X\right\|_\infty-\xi)} < \lambda < \frac{r^p}{\xi^p}
\end{equation}
we obtain $\xi^p \lambda<r^p$ and
\begin{equation}
\label{ineq for empty int in Lp}
\lambda u(\left\|X\right\|_\infty-\xi) + (1-\lambda)u(\left\|X\right\|_\infty) < \alpha\,.
\end{equation}
Since $(\Omega,\cF,\probp)$ is nonatomic, $\probp(A)=\lambda$ for a suitable $A\in\cF$. Now, consider the random variable $Y:=(X-\xi)1_A+X1_{A^c}$. Clearly, $\left\|X-Y\right\|^p_p = \xi^p \probp(A)<r^p$. Moreover, as a consequence of \eqref{ineq for empty int in Lp}, we obtain
\begin{equation}
\E[u(Y)] \leq \probp(A)u(\left\|X\right\|_\infty-\xi) + \probp(A^c)u(\left\|X\right\|_\infty) < \alpha\,.
\end{equation}
This implies that $X$ is not an interior point of $\cA_{u}^p$. As a result, by the density of $L^\infty$ in $L^p$ we can conclude that $\cA_{u}^p$ has empty interior.
\end{proof}

\medskip

The following result follows immediately from the discussion preceding Lemma~\ref{lemma interior utility}.

\begin{corollary}
\label{corollary infinite index}
Assume that either $u(x)\le \alpha$ for all $x\in\R$ or that $u$ attains the value $-\infty$. Then, for any traded asset $S=(S_0,S_T)$ with $S_T\in L^\infty$ such that $\rho_{\cA,S}$ is finite-valued on $L^\infty$, we have $\Indexfin(\rho_{\cA_u^\infty,S})=\infty$.
\end{corollary}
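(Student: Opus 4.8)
The plan is to show that the set over which the infimum in the definition of $\Indexfin(\rho_{\cA_u^\infty,S})$ is taken is empty, so that the infimum equals $\infty$. Concretely, I would establish that $\Closure_p(\cA_u^\infty)$ has empty interior in $L^p$ for every $1\le p<\infty$; by Theorem~\ref{theorem on Lp} this means that no $L^p$ with finite $p$ admits a finite-valued, continuous extension of $\rho_{\cA_u^\infty,S}$, and hence the index is infinite.

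First I would invoke the discussion preceding Lemma~\ref{lemma interior utility}, which records the inclusion $\Closure_p(\cA_u^\infty)\subset\cA_u^p$ coming from Lemma~\ref{lemma Au closed}. Since a subset of a set with empty interior again has empty interior, it suffices to check that $\cA_u^p$ has empty interior in $L^p$ for each fixed $p$. This reduces the corollary to verifying, in each of the two cases, that one of the hypotheses of Lemma~\ref{lemma interior utility} is met.

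In the first case, $u(x)\le\alpha$ for all $x\in\R$ is exactly hypothesis \textit{(i)} of Lemma~\ref{lemma interior utility}, so the conclusion is immediate. In the second case, where $u$ attains the value $-\infty$, I would use monotonicity of $u$: the set $\{x\in\R \,; \ u(x)=-\infty\}$ is a lower half-line, so $u(-x)=-\infty$ for all sufficiently large $x>0$. Hence $\frac{x^p}{u(-x)}=0$ for all large $x$ and therefore $\lim_{x\to\infty}\frac{x^p}{u(-x)}=0$, which is precisely hypothesis \textit{(ii)} of Lemma~\ref{lemma interior utility}. In either case $\cA_u^p$ has empty interior in $L^p$ for every $p\in[1,\infty)$.

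Consequently the infimum defining $\Indexfin(\rho_{\cA_u^\infty,S})$ is over the empty set and equals $\infty$. I expect no genuine obstacle here; the only point requiring a little care is the convention that a finite numerator divided by $-\infty$ is $0$, which is exactly what makes hypothesis \textit{(ii)} applicable once $u$ takes the value $-\infty$.
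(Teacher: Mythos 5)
Your proposal is correct and follows essentially the same route as the paper, whose proof is precisely the ``discussion preceding Lemma~\ref{lemma interior utility}'': the inclusion $\Closure_p(\cA_u^\infty)\subset\cA_u^p$ from Lemma~\ref{lemma Au closed}, combined with Lemma~\ref{lemma interior utility}, where the case $u(x)\le\alpha$ for all $x$ is hypothesis \textit{(i)} and the case where $u$ attains $-\infty$ yields hypothesis \textit{(ii)} via monotonicity and the convention $x^p/(-\infty)=0$. Your explicit handling of that convention, and of the fact that $\{u=-\infty\}$ is a lower half-line, just spells out what the paper leaves implicit.
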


\smallskip

\begin{remark}
As an example of a utility function attaining the value $-\infty$ we can consider a capped \textit{log-utility} of the form
\begin{equation}
u(x):=\left\{
\begin{array}{c l}
C & \mbox{if} \ x\geq c \\
\log(1+x) & \mbox{if} \ 0\leq x<c \\
-\infty & \mbox{if} \ x<0
\end{array}
\right.
\end{equation}
for fixed constants $c>0$ and $C=\log(1+c)$.
\end{remark}

\medskip

In view of Corollary~\ref{corollary infinite index} we assume for the rest of this section that \textit{$u$ is finite-valued and $u(x)>\alpha$ for some $x\in\R$}. Under this assumption, we can refine Theorem~\ref{theorem on Lp} as follows.

\begin{theorem}
\label{corollary utility on Lp}
\begin{enumerate}[(i)]
\item For any $1\leq p<\infty$ we have $\Closure_p(\cA_u^\infty)=\cA_u^p$.
%\begin{equation}
%\label{link utility acceptance sets}
%\Closure_p(\cA_u^\infty)=\cA_u^p\,.
%\end{equation}
\item Let $S=(S_0,S_T)$ be a traded asset with $S_T\in L^\infty$. Assume $\rho_{\cA_u^\infty,S}$ is finite-valued and, hence, continuous on $L^\infty$, and fix $1\leq p<\infty$. The following statements are equivalent:
\begin{enumerate}[(a)]
	\item $\rho_{\cA_u^\infty,S}$ can be extended to a finite-valued and, hence, continuous risk measure on $L^p$;
	\item $\cA_u^p$ has nonempty interior in $L^p$.
\end{enumerate}
In this case, the extension is unique and given by $\rho_{\cA_u^p,S}$.
\end{enumerate}
\end{theorem}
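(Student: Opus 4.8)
The plan is to treat the two parts in sequence, since part (ii) will follow almost immediately from part (i) once the latter is established. Indeed, under the standing assumption that $u$ is finite-valued and $u(x)>\alpha$ for some $x$, the set $\cA_u^\infty$ is a convex, law-invariant acceptance set and $S_T\in L^\infty$, so Theorem~\ref{theorem on Lp} applies verbatim with $\cA=\cA_u^\infty$. That theorem already gives the equivalence between extendability and the nonemptiness of the interior of $\Closure_p(\cA_u^\infty)$, together with the formula $\rho_{\Closure_p(\cA_u^\infty),S}$ for the unique extension. Thus, once we know from part (i) that $\Closure_p(\cA_u^\infty)=\cA_u^p$, statement (d) of Theorem~\ref{theorem on Lp} becomes precisely statement (b) here, and the extension becomes $\rho_{\cA_u^p,S}$. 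So the entire content of the theorem is concentrated in the identity $\Closure_p(\cA_u^\infty)=\cA_u^p$.

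For part (i), one inclusion is easy. Since every $X\in\cA_u^\infty$ is bounded and satisfies $\E[u(X)]\geq\alpha$, we have $\cA_u^\infty\subseteq\cA_u^p$; as $\cA_u^p$ is closed in $L^p$ by Lemma~\ref{lemma Au closed}, taking closures yields $\Closure_p(\cA_u^\infty)\subseteq\cA_u^p$. The substance is the reverse inclusion: every $X\in\cA_u^p$ must be approximated in $L^p$-norm by bounded, acceptable positions. The first idea is to truncate $X$ to $X_n:=(X\wedge n)\vee(-n)$, which lies in $L^\infty$ and converges to $X$ in $L^p$. Using that $u$ is bounded above (so the loss from truncating above is controlled by $(\sup u-u(n))\probp(X>n)\to0$) and that $u(X)^-$ is integrable because $\E[u(X)]\geq\alpha>-\infty$ (so the effect of truncating below vanishes on the tail $\{X<-n\}$), dominated convergence gives $\E[u(X_n)]\to\E[u(X)]\geq\alpha$.

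The main obstacle is the boundary case $\E[u(X)]=\alpha$: here $\E[u(X_n)]\to\alpha$ need not stay above $\alpha$, so the truncations may fail to be acceptable. I would remove this difficulty by making the acceptability strict before truncating. Choosing a constant $x_1$ with $u(x_1)>\alpha$ (available by assumption) and setting $X_\lambda:=(1-\lambda)X+\lambda x_1 1_\Omega$ for $\lambda\in(0,1)$, concavity of $u$ gives the pointwise bound $u(X_\lambda)\geq(1-\lambda)u(X)+\lambda u(x_1)$, whence $\E[u(X_\lambda)]\geq(1-\lambda)\alpha+\lambda u(x_1)>\alpha$, so $X_\lambda\in\cA_u^p$ with strict inequality, while $\|X_\lambda-X\|_p=\lambda\|x_1 1_\Omega-X\|_p\to0$ as $\lambda\to0$. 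For each fixed $\lambda$, the truncation argument above now yields $\E[u((X_\lambda\wedge n)\vee(-n))]\to\E[u(X_\lambda)]>\alpha$, so the bounded positions $(X_\lambda\wedge n)\vee(-n)$ are acceptable for all large $n$ and converge to $X_\lambda$ in $L^p$; hence $X_\lambda\in\Closure_p(\cA_u^\infty)$. Letting $\lambda\to0$ and using that $\Closure_p(\cA_u^\infty)$ is closed, we conclude $X\in\Closure_p(\cA_u^\infty)$, which establishes $\cA_u^p\subseteq\Closure_p(\cA_u^\infty)$ and completes part (i). Finally, for part (ii) I would simply record the reduction described at the outset: combining Theorem~\ref{theorem on Lp} applied to $\cA_u^\infty$ with the identity from part (i) gives both the equivalence of (a) and (b) and the explicit form $\rho_{\cA_u^p,S}$ of the unique extension.
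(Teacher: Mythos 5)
Your proposal is correct and takes essentially the same route as the paper: part (ii) is reduced to part (i) via Theorem~\ref{theorem on Lp}, the inclusion $\Closure_p(\cA_u^\infty)\subseteq\cA_u^p$ comes from Lemma~\ref{lemma Au closed}, the boundary case is removed by passing to strict acceptability $\E[u(X)]>\alpha$ through a convex combination (the paper mixes with a general $Y\in L^p$ with $\E[u(Y)]>\alpha$, you with a constant $x_1$), and the approximation is then done by truncation using that $u$ is bounded above and $u(X)^-$ is integrable. The only cosmetic difference is that you truncate both tails at once via $(X\wedge n)\vee(-n)$, whereas the paper handles the two tails in two separate stages ($X1_{\{X\ge -n\}}$, then $X1_{\{X\le n\}}$); nothing essential changes.
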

\begin{proof}
By virtue of Theorem~\ref{theorem on Lp} it is enough to show part \textit{(i)}. To this end, since $\cA_u^p$ is closed by Lemma~\ref{lemma Au closed}, we only need to prove that any element $X\in\cA_u^p$ is the limit in $L^p$ of a suitable sequence $(X_n)$ of elements in $\cA_u^\infty$. Now, take $X\in\cA_u^p$.

\smallskip

Since there exists $x\in\R$ such that $u(x)>\alpha$, we can find $Y\in L^p$ with $\E[u(Y)]>\alpha$. Then, setting $Z_\lambda:=\lambda X+(1-\lambda)Y$ for $\lambda\in(0,1)$, the concavity of $u$ yields
\begin{equation}
\E[u(Z_\lambda)] \geq \lambda\E[u(X)]+(1-\lambda)\E[u(Y)] > \alpha\,.
\end{equation}
Since $Z_\lambda\to X$ in $L^p$ as $\lambda\to 1$, this shows we may assume that $\E[u(X)]>\alpha$ without loss of generality.

\smallskip

Now, assume $X$ is bounded from below almost surely, and set $X_n:=X1_{\{X\le n\}}\in L^\infty$ for any $n\in\N$. Then
\begin{equation}
\label{auxiliary equation}
\alpha < \E[u(X)] = \E[u(X_n)]+\E[u(X1_{\{X>n\}})]\,.
\end{equation}
Since $u$ is bounded from above, we have $\E[u(X1_{\{X>n\}})]\to 0$ as $n\to\infty$ by dominated convergence, hence $\E[u(X_n)]>\alpha$ for large enough $n\in\N$. In particular, we eventually have $X_n\in\cA_u^\infty$. This shows that $X\in\Closure_p(\cA_u^\infty)$ because $X_n\to X$ in $L^p$ as $n\to\infty$.

\smallskip

Finally, assume $X$ is not bounded from below almost surely and define for each $n\in\N$ the random variable $X_n:=X1_{\{X\ge -n\}}\in L^p$. Clearly, $X_n\to X$ in $L^p$ as $n\to\infty$. Moreover, $\E[u(X_n)]\ge\E[u(X)]>\alpha$ for all $n\in\N$ by the monotonicity of $u$. Since every $X_n$ is bounded from below almost surely, we can rely on the previous argument and conclude that $X_n\in\Closure_p(\cA_u^\infty)$ for any $n\in\N$ so that $X\in\Closure_p(\cA_u^\infty)$.
\end{proof}

%%%%%%%%%%%%%%%%%%%%%%%%%%%%%%%%%%%%

\subsubsection*{Exponential utility}

The index of finiteness may be $\infty$ even if $u$ never attains the value $-\infty$. To see this we consider the \textit{exponential utility} function $u(x):=1-e^{-\gamma x}$, $x\in\R$, for some fixed $\gamma>0$. The following result shows that finite-valued risk measures on $L^\infty$ based on expected exponential utility do not admit finite-valued, hence continuous, extensions to any $L^p$ space, $1\le p<\infty$.

\begin{corollary}
Let $S=(S_0,S_T)$ be a traded asset with $S_T\in L^\infty$, and assume $\rho_{\cA_u^\infty,S}$ is finite-valued. Then $\Indexfin(\rho_{\cA_u^\infty,S})=\infty$.
\end{corollary}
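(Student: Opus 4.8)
The plan is to show that the set appearing in the definition of $\Indexfin(\rho_{\cA_u^\infty,S})$ is empty, so that the infimum equals $+\infty$ by convention. Concretely, I will argue that $\Closure_p(\cA_u^\infty)$ has empty interior in $L^p$ for \emph{every} $1\le p<\infty$. To set this up, I first note that the exponential utility $u(x)=1-e^{-\gamma x}$ is increasing, concave, finite-valued, bounded above by $1$, and unbounded below without ever attaining $-\infty$; in particular it is finite-valued, so it fits the standing assumptions of this part of the section (namely $u$ finite-valued with $u(x)>\alpha$ for some $x$). Consequently Theorem~\ref{corollary utility on Lp}\textit{(i)} applies and gives $\Closure_p(\cA_u^\infty)=\cA_u^p$ for all $1\le p<\infty$, reducing the problem to showing that $\cA_u^p$ has empty interior.

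The heart of the argument is to verify the decay condition \textit{(ii)} of Lemma~\ref{lemma interior utility}. Here $u(-x)=1-e^{\gamma x}$, and for every fixed $p\in[1,\infty)$,
\begin{equation}
\frac{x^p}{u(-x)} = \frac{x^p}{1-e^{\gamma x}} = -\frac{x^p}{e^{\gamma x}-1} \longrightarrow 0 \quad\text{as } x\to\infty,
\end{equation}
because the exponential in the denominator dominates the polynomial numerator. This is the only computation required, and it poses no real obstacle: it is merely the statement that exponential growth beats polynomial growth. Applying Lemma~\ref{lemma interior utility}\textit{(ii)}, I conclude that $\cA_u^p$ has empty interior in $L^p$ for every $1\le p<\infty$.

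Combining the two facts, $\Closure_p(\cA_u^\infty)=\cA_u^p$ has empty interior for all $1\le p<\infty$, so by the equivalence of \textit{(a)} and \textit{(d)} in Theorem~\ref{theorem on Lp} the risk measure $\rho_{\cA_u^\infty,S}$ admits no finite-valued, continuous extension to any $L^p$ with $p<\infty$. Hence the set $\{p\in[1,\infty)\,;\ \Closure_p(\cA_u^\infty)\text{ has nonempty interior in }L^p\}$ is empty, and therefore $\Indexfin(\rho_{\cA_u^\infty,S})=\infty$, as claimed. Since the reasoning is uniform in $p$, there is no genuinely difficult step; the only point requiring a moment of care is confirming that exponential utility meets the section's standing assumptions so that Theorem~\ref{corollary utility on Lp}\textit{(i)} is available, and then recognizing that the relevant tool is the decay criterion in Lemma~\ref{lemma interior utility}\textit{(ii)} rather than Corollary~\ref{corollary infinite index}, which does not apply here precisely because $u$ is finite-valued.
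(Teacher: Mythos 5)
Your proof is correct and follows essentially the same route as the paper's: verify the decay condition $\lim_{x\to\infty}x^p/u(-x)=\lim_{x\to\infty}x^p/(1-e^{\gamma x})=0$, invoke Lemma~\ref{lemma interior utility}\textit{(ii)} to get that $\cA_u^p$ has empty interior for every $1\le p<\infty$, and conclude via Theorem~\ref{corollary utility on Lp} that no finite-valued, continuous extension exists, so the index is $\infty$. Your explicit detour through Theorem~\ref{corollary utility on Lp}\textit{(i)} and Theorem~\ref{theorem on Lp}\textit{(a)}$\Leftrightarrow$\textit{(d)} is just an unpacked version of what the paper cites in one step, so there is no substantive difference.
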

\begin{proof}
For any  $1\leq p<\infty$ we have
\begin{equation}
\lim_{x\to\infty}\frac{x^p}{u(-x)}=\lim_{x\to\infty}\frac{x^p}{1-e^{\gamma x}}=0\,.
\end{equation}
Hence, Lemma \ref{lemma interior utility} implies that the interior of $\cA_u^p$ is empty, thus $\rho_{\cA_u^\infty,S}$ does not admit any finite-valued, continuous extension to $L^p$ by Theorem~\ref{corollary utility on Lp}.
\end{proof}

%%%%%%%%%%%%%%%%

\subsubsection*{Flat power utility}

We now show that we can find convex risk measures on $L^\infty$ whose index of finiteness is equal to any prescribed number $1\leq q<\infty$. To this effect recall that the \textit{flat power utility} function is defined by
\begin{equation}
u(x):=\left\{
\begin{array}{c l}
-\left|x\right|^q &\mbox{if} \ x<0 \\
0 &\mbox{if} \ x\geq0
\end{array}
\right.
\end{equation}
where $1\leq q<\infty$.

\begin{corollary}
Let $S=(S_0,S_T)$ be a traded asset with $S_T\in L^\infty$, and assume $\rho_{\cA_u^\infty,S}$ is finite-valued. Then $\Indexfin(\rho_{\cA_u^\infty,S})=q$ and the index is attained.
\end{corollary}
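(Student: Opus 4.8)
The plan is to read off the index directly from the topological behaviour of the sets $\cA_u^p$, exploiting the very simple structure of the flat power utility. By part \textit{(i)} of Theorem~\ref{corollary utility on Lp} we have $\Closure_p(\cA_u^\infty)=\cA_u^p$ for every $1\le p<\infty$, so the definition of the index reduces to
\begin{equation}
\Indexfin(\rho_{\cA_u^\infty,S})=\inf\{p\in[1,\infty) \,; \ \cA_u^p \ \mbox{has nonempty interior in} \ L^p\}\,.
\end{equation}
The first step is to rewrite the acceptance set. Since $u(x)=-|x|^q$ for $x<0$ and $u(x)=0$ for $x\ge0$, we have $u(X)=-(X^-)^q$ pointwise, where $X^-:=\max(-X,0)$, and therefore $\E[u(X)]=-\left\|X^-\right\|_q^q$. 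Because the standing assumption $u(x)>\alpha$ for some $x$ forces $\alpha<0$, setting $c:=(-\alpha)^{1/q}>0$ we obtain
\begin{equation}
\cA_u^p=\{X\in L^p \,; \ \left\|X^-\right\|_q\le c\}\,.
\end{equation}
Thus determining the index amounts to deciding, for each $p$, whether this sublevel set of $X\mapsto\left\|X^-\right\|_q$ has interior points in $L^p$.

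For $p<q$ I would simply invoke Lemma~\ref{lemma interior utility}. Indeed, for $x>0$ one computes $u(-x)=-x^q$, so that
\begin{equation}
\lim_{x\to\infty}\frac{x^p}{u(-x)}=\lim_{x\to\infty}\left(-x^{p-q}\right)=0\,,
\end{equation}
since $p-q<0$. Hence condition \textit{(ii)} of Lemma~\ref{lemma interior utility} is met and $\cA_u^p$ has empty interior in $L^p$. This rules out every $p<q$ from the index set.

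It remains to show that $p=q$ belongs to the index set, which is where the genuine content lies: Lemma~\ref{lemma interior utility} gives no information at the threshold $p=q$, so one must produce an explicit interior point from the flat structure of $u$. The natural candidate is $X=0$. For any $Y\in L^q$ the pointwise bound $Y^-\le|Y|$ gives $\left\|Y^-\right\|_q\le\left\|Y\right\|_q$, so every $Y$ in the open ball of radius $c$ about $0$ satisfies $\left\|Y^-\right\|_q\le\left\|Y\right\|_q<c$ and hence lies in $\cA_u^q$. Therefore $0$ is an interior point and $\cA_u^q$ has nonempty interior. Combining the two cases, the infimum defining the index is attained and equals $q$. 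The same ball argument, now using $\left\|Y^-\right\|_q\le\left\|Y\right\|_q\le\left\|Y\right\|_p$ on a probability space, shows in fact that $\cA_u^p$ has nonempty interior for every $p\ge q$, which is consistent with the value $q$. The main obstacle is precisely this boundary case $p=q$: once the acceptance set is recognised as a norm sublevel set the verification is immediate, but it cannot be obtained from the general empty-interior criterion of Lemma~\ref{lemma interior utility}.
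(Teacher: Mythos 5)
Your proof is correct, and its skeleton is the same as the paper's: both reduce the computation of the index to deciding when $\cA_u^p$ has nonempty interior in $L^p$ (via Theorem~\ref{corollary utility on Lp}), and both dispose of the range $p<q$ by the identical application of condition \textit{(ii)} of Lemma~\ref{lemma interior utility}. The genuine difference is in the case $p\ge q$. The paper obtains nonempty interior by asserting that $U(X):=\E[u(X)]$ is a continuous functional on $L^p$ for $p\ge q$ (the continuity is left as ``easily seen''; for the analogous non-HARA example the paper invokes the extended Namioka--Klee theorem of \cite{BiaginiFrittelli2009}), so that $\cA_u^p$ contains the nonempty open set $U^{-1}((\alpha,\infty))$. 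You instead identify the acceptance set explicitly as the sublevel set $\cA_u^p=\{X\in L^p \,;\ \|X^-\|_q\le c\}$ with $c=(-\alpha)^{1/q}>0$, and exhibit $0$ as an interior point: by $\|X^-\|_q\le\|X\|_q\le\|X\|_p$ on a probability space, the open $L^p$-ball of radius $c$ centred at $0$ lies inside $\cA_u^p$. This is more elementary and self-contained --- it replaces an appeal to continuity of the utility functional by a one-line norm estimate, precisely at the threshold $p=q$ where Lemma~\ref{lemma interior utility} is silent --- at the cost of exploiting the special norm-ball structure of the flat power utility, whereas the paper's continuity argument is the template that transfers to other utilities. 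Either way the set of admissible exponents is exactly $[q,\infty)$, so the index equals $q$ and is attained.
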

\begin{proof}
First, note that
\begin{equation}
\E[u(X)]=-\left\|X\wedge0\right\|^q_q \ \ \ \mbox{for all} \ X\in L^1\,.
\end{equation}
Since we assumed that $u(x)>\alpha$ for some $x\in\R$, this implies $\alpha<0$ in the present case.

\smallskip

For $p\geq q$ the map $U:L^p\to\R$ defined by
\begin{equation}
U(X):=\E[u(X)] \ \ \ \mbox{for} \ X\in L^p
\end{equation}
is easily seen to be continuous. Since $\cA_u^p$ contains the nonempty, open set $U^{-1}((\alpha,\infty))$, it must have nonempty interior, hence $\Indexfin(\rho_{\cA_u^\infty,S})\leq q$ by Theorem~\ref{corollary utility on Lp}. In particular, note that $\rho_{\cA_u^\infty,S}$ can be extended to a finite-valued, continuous risk measure on $L^q$.

\smallskip

If $p<q$, then it is immediate to see that
\begin{equation}
\lim_{x\to\infty}\frac{x^p}{u(-x)}=-\lim_{x\to\infty}x^{p-q}=0\,.
\end{equation}
Consequently, the interior of $\cA_u^p$ is empty by Lemma \ref{lemma interior utility}, hence $\Indexfin(\rho_{\cA_u^\infty,S})\geq q$ as a consequence of Theorem~\ref{corollary utility on Lp}. In conclusion, $\Indexfin(\rho_{\cA_u^\infty,S})=q$ and the index is attained.
\end{proof}

%%%%%%%%%%%%%%%%%%%%%%%%%%%%%%%%%%

\subsubsection*{An example of a non-HARA utility}

In this section we focus on the utility function
\begin{equation}
u(x):=\left\{
\begin{array}{c l}
C &\mbox{if} \ x\geq c \\
\frac{1}{a}(1+ax-\sqrt{1+a^2x^2}) &\mbox{if} \ x<c
\end{array}
\right.
\end{equation}
for fixed parameters $a>0$ and $c\geq0$, and $C=\frac{1}{a}(1+ac-\sqrt{1+a^2c^2})$. The uncapped version was proposed in Section 2.2.2 in \cite{Carmona2009} as a tractable alternative to exponential utility if one wants to penalize negative wealth less severely. The following result shows that the corresponding risk measures can always be extended to $L^1$.

\begin{corollary}
Let $S=(S_0,S_T)$ be a traded asset with $S_T\in L^\infty$, and assume $\rho_{\cA_u^\infty,S}$ is finite-valued. Then $\Indexfin(\rho_{\cA_u^\infty,S})=1$ and the index is attained.
\end{corollary}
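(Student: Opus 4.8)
The plan is to show directly that $\cA_u^1$ has nonempty interior in $L^1$; by Theorem~\ref{corollary utility on Lp} this is equivalent to the existence of a finite-valued, continuous extension to $L^1$, and since $\Indexfin$ is by definition bounded below by $1$, it yields $\Indexfin(\rho_{\cA_u^\infty,S})=1$ with the infimum attained at $p=1$. The first thing to notice is that the emptiness criterion of Lemma~\ref{lemma interior utility}\textit{(ii)} is of no use here, because the present $u$ decays only \emph{linearly} at $-\infty$. Indeed, writing $x=-t$ and expanding $\sqrt{1+a^2t^2}=at+\tfrac{1}{2at}+o(t^{-1})$ gives $u(-t)=\tfrac{1}{a}-2t+o(1)$ as $t\to\infty$, so that $x^p/u(-x)\to-\tfrac{1}{2}$ when $p=1$ and diverges to $-\infty$ when $p>1$; in no case is the limit $0$. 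Hence empty interior cannot be deduced from Lemma~\ref{lemma interior utility} for any $p\ge 1$, and a positive argument is required.

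The key step is to show that the expected-utility functional $U(X):=\E[u(X)]$ is Lipschitz continuous on all of $L^1$, exactly as in the flat power utility case with exponent $q=1$. For this I would verify that $u$ is globally Lipschitz. Differentiating the branch $x<c$ gives
\begin{equation}
u'(x)=1-\frac{ax}{\sqrt{1+a^2x^2}}\in(0,2),
\end{equation}
while $u'(x)=0$ for $x>c$; since $\bigl|ax/\sqrt{1+a^2x^2}\bigr|<1$, the function $u$ is $2$-Lipschitz on $\R$. Consequently $|U(X)-U(Y)|\le\E|u(X)-u(Y)|\le 2\,\E|X-Y|=2\left\|X-Y\right\|_1$, and $U$ is finite-valued on $L^1$ because $|u(X)|\le|u(0)|+2|X|$ is integrable whenever $X\in L^1$. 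Thus $U:L^1\to\R$ is continuous.

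Continuity of $U$ makes the sublevel complement $U^{-1}((\alpha,\infty))$ open, and it is nonempty since by our standing assumption $u(x_0)>\alpha$ for some $x_0\in\R$, whence the constant position $X\equiv x_0$ lies in it. As $U^{-1}((\alpha,\infty))\subset\cA_u^1$, the acceptance set $\cA_u^1$ has nonempty interior in $L^1$. Invoking Theorem~\ref{corollary utility on Lp}\textit{(i)} to identify $\Closure_1(\cA_u^\infty)=\cA_u^1$, and then Theorem~\ref{corollary utility on Lp}\textit{(ii)}, we conclude that $\rho_{\cA_u^\infty,S}$ admits a finite-valued, continuous extension to $L^1$, so that $\Indexfin(\rho_{\cA_u^\infty,S})=1$ and the index is attained. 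The main obstacle is the Lipschitz estimate for $u$: once the uniform bound $u'\in(0,2)$ is in hand, the passage to $L^1$-continuity of $U$ and the interior argument are routine. The conceptual point is simply that the linear tail of this non-HARA utility forces the canonical model space to be as large as possible, namely $L^1$.
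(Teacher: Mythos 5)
Your proof is correct, and its skeleton coincides with the paper's: establish continuity of $U(X):=\E[u(X)]$ on $L^1$, observe that the nonempty open set $U^{-1}((\alpha,\infty))$ is contained in $\cA_u^1$, hence $\cA_u^1$ has nonempty interior, and conclude via Theorem~\ref{corollary utility on Lp} that a finite-valued, continuous extension exists on $L^1$, so that $\Indexfin(\rho_{\cA_u^\infty,S})=1$ with the index attained. The genuine difference is in how continuity of $U$ is justified. The paper notes that $U$ is concave, increasing and real-valued on $L^1$ and invokes the extended Namioka--Klee theorem (Theorem 1 in \cite{BiaginiFrittelli2009}); you instead prove a global Lipschitz bound, computing $u'(x)=1-\tfrac{ax}{\sqrt{1+a^2x^2}}\in(0,2)$ on the branch $x<c$ and $u'=0$ above $c$ (the two branches glue because $C$ is chosen to make $u$ continuous at $c$), which gives $|U(X)-U(Y)|\le 2\|X-Y\|_1$ directly. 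Your route is more elementary and self-contained, and it makes explicit a point the paper leaves implicit: the finiteness of $U$ on $L^1$, via $|u(X)|\le|u(0)|+2|X|$, which is needed even for the Namioka--Klee argument and is precisely the linear decay of this utility at $-\infty$. What the paper's citation-based argument buys is robustness: it requires no derivative computation and applies verbatim to any finite-valued concave increasing utility whose expectation is finite on $L^1$, whereas your Lipschitz estimate is specific to this $u$ (though it also yields a quantitative modulus of continuity). Finally, your opening observation that Lemma~\ref{lemma interior utility}\textit{(ii)} cannot apply here --- since $u(-t)=\tfrac{1}{a}-2t+o(1)$ decays only linearly, so $x^p/u(-x)$ never tends to $0$ --- is accurate, but it is a sanity check rather than a logically necessary part of the argument.
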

\begin{proof}
Define the map $U:L^1\to\R$ by setting
\begin{equation}
U(X) := \E[u(X)] \ \ \ \mbox{for} \ X\in L^1\,.
\end{equation}
Since $U$ is concave and increasing, it is continuous by Theorem 1 in \cite{BiaginiFrittelli2009}. As a result, $\cA^1_u$ has nonempty interior because it contains the nonempty, open set $U^{-1}((\alpha,\infty))$. In conclusion, Theorem~\ref{corollary utility on Lp} implies that $\Indexfin(\rho_{\cA_u^\infty,S})=1$ and the index is clearly attained.
\end{proof}

%%%%%%%%%%%%%%%%%%%%%%%%%%%%%%%%%%%%%%%%%%%%%%%%%%%%%%%

\section{Max-correlation risk measures}

In this section we provide a characterization of the index of finiteness for the so-called max-correlation risk measure introduced by R\"{u}schendorf in~\cite{Rueschendorf2006} and studied by Ekeland and Schachermayer in~\cite{EkelandSchachermayer2011} and by Ekeland, Galichon, and Henry in~\cite{EkelandGalichonHenry2012}.

\medskip

Consider a probability measure $\probq$ on $(\Omega,\cF)$ that is absolutely continuous with respect to $\probp$. Assume that $1\leq p\leq\infty$ is such that $\frac{d\probq}{d\probp}\in L^{p^\prime}$ and define the \textit{max-correlation} risk measure $\rho_{\probq,p}:L^p\to\R\cup\{\infty\}$ by
\begin{equation}
\rho_{\probq,p}(X):=\sup\left\{\E[-XY] \,; \ Y\sim\frac{d\probq}{d\probp}\right\}\quad\mbox{for} \ X\in L^p\,.
\end{equation}
As a consequence of Theorem~13.4 in~\cite{ChongRice1971}, for any $X\in L^\infty$ we have the equivalent (and more common) formulation
\begin{equation}
\rho_{\probq,p}(X):=\sup_{X'\sim X}\E_\probq[-X'] \ \ \ \mbox{for} \ X\in L^p\,.
\end{equation}

The acceptance set associated with $\rho_{\probq,p}$ is given by
\begin{equation}
\cA_\probq^p:=\{X\in L^p \,; \ \rho_{\probq,p}(X)\leq0\}=\left\{X\in L^p \,; \ \E[XY]\geq0, \ \forall Y\sim\frac{d\probq}{d\probp}\right\}\,.
\end{equation}
Clearly, $\cA_\probq^p$ is law-invariant and coherent, i.e. a convex cone.

\medskip

We start by showing when the risk measure $\rho_{\cA_\probq^\infty,S}$ is finite-valued on $L^\infty$.

\begin{proposition}
Let $S=(S_0,S_T)$ be a traded asset with $S_T\in L^\infty$. The following are equivalent:
\begin{enumerate}[(a)]
	\item $\rho_{\cA_\probq^\infty,S}$ is finite-valued and, hence, continuous on $L^\infty$;
	\item $\inf_{Z\sim S_T}\E_\probq[Z]>0$.
\end{enumerate}
\end{proposition}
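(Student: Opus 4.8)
The claim characterizes finiteness of $\rho_{\cA_\probq^\infty,S}$ on $L^\infty$ via the dual representation machinery already developed. The natural strategy is to invoke Lemma~\ref{lemma dual repr} with $\cX=L^\infty$, using that $\cA_\probq^\infty$ is a closed, convex, coherent acceptance set. By that lemma, $\rho_{\cA_\probq^\infty,S}$ attains a finite value (equivalently, does not attain $-\infty$) if and only if $(L^\infty)'_{+,S}\cap B(\cA_\probq^\infty)$ is nonempty. Since $\cA_\probq^\infty$ is a convex cone, its support function $\sigma_{\cA_\probq^\infty}$ takes only the values $0$ and $-\infty$, so the barrier cone $B(\cA_\probq^\infty)$ is precisely the polar cone $\{\psi \,; \ \psi(A)\geq0 \ \forall A\in\cA_\probq^\infty\}$, on which $\sigma_{\cA_\probq^\infty}\equiv0$. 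Thus finiteness reduces to the existence of a single functional $\psi\in(L^\infty)'_{+,S}$ that is nonnegative on $\cA_\probq^\infty$.

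First I would identify the polar cone of $\cA_\probq^\infty$ explicitly. Because $\cA_\probq^\infty=\{X \,; \ \E[XY]\geq0 \ \forall Y\sim \frac{d\probq}{d\probp}\}$ is defined as the set on which a family of linear functionals is nonnegative, its bipolar (inside the pairing of $L^\infty$ with a suitable dual) should be the closed convex cone generated by the densities $\{Y \,; \ Y\sim\frac{d\probq}{d\probp}\}$. The delicate point is that the relevant dual of $L^\infty$ is the full $(L^\infty)'$, which contains singular functionals; however, since $\cA_\probq^\infty$ is law-invariant and the problem is about a cone generated by $L^1$-densities, I expect that the finiteness condition can be analyzed by restricting attention to the $L^1$-representable part of the polar cone, i.e.\ to densities in $L^{p'}=L^1$. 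This is where I would lean on the structure: a functional in $(L^\infty)'_{+,S}$ lying in the polar cone, after projecting onto its countably-additive part, should yield an absolutely continuous $\psi$ represented by some $Z\in L^1_+$ with $\E[Z S_T]=S_0$ and $Z$ in the closed cone generated by the copies of $\frac{d\probq}{d\probp}$.

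The crux is then to translate ``$\psi\in(L^\infty)'_{+,S}\cap B(\cA_\probq^\infty)$ is nonempty'' into the scalar inequality $\inf_{Z\sim S_T}\E_\probq[Z]>0$. The key computation is to recognize that the normalization $\psi(S_T)=S_0$, combined with $\psi$ ranging over the polar cone generated by rearrangements of the density, leads by a Hardy--Littlewood / Chong--Rice rearrangement argument (as already used in the definition of $\rho_{\probq,p}$) to an extremal value expressed through $\inf_{Z\sim S_T}\E_\probq[Z]$. Concretely, for a candidate density $Y\sim\frac{d\probq}{d\probp}$ the constraint $\E[Y S_T]=S_0$ is feasible for some rearrangement precisely when $S_0$ lies between $\inf_{Z\sim S_T}\E_\probq[Z]$ and the corresponding supremum; the positivity of the infimum is exactly what guarantees one can scale a density to meet the normalization while staying in the positive cone. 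I would therefore show directly that (b) holds if and only if there exists $\lambda>0$ and $Y\sim\frac{d\probq}{d\probp}$ with $\lambda\,\E[Y S_T]=S_0$ and $\lambda Y\geq0$ nonnegative on $\cA_\probq^\infty$, which by the cone structure is automatic.

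\textbf{Main obstacle.}
The hardest part will be handling the singular component of $(L^\infty)'$: a priori the functional $\psi$ furnished by Lemma~\ref{lemma dual repr} need not be countably additive, so I cannot immediately write it as $\E[Z\,\cdot\,]$ for $Z\in L^1$. The argument must show that singular functionals either cannot satisfy the normalization $\psi(S_T)=S_0$ with $S_T\in L^\infty$ while remaining in the polar cone, or that their presence does not affect the scalar criterion. I anticipate resolving this by exploiting law-invariance together with nonatomicity: by Remark~\ref{remark on law invariant sets}, $\cA_\probq^\infty$ is $\sigma(L^\infty,L^1)$-closed, so its polar relative to the $L^1$-duality already determines membership, and the relevant supporting functionals may be taken in $L^1_+$. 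This reduction to countably additive functionals is the technical linchpin that makes the equivalence with the infimum condition clean.
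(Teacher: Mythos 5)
There is a genuine gap at the very first step of your plan, and it is fatal to the whole strategy. Lemma~\ref{lemma dual repr} characterizes when $\rho_{\cA,S}$ \emph{attains some finite value}, equivalently \emph{never takes the value $-\infty$}; it says nothing about the value $+\infty$. Statement \textit{(a)}, however, asserts that $\rho_{\cA_\probq^\infty,S}$ is \emph{finite-valued}, and for a coherent acceptance set it is precisely the exclusion of $+\infty$ that carries the content of \textit{(b)}. Your reduction ``finiteness reduces to the existence of a single functional $\psi\in(L^\infty)'_{+,S}$ that is nonnegative on $\cA_\probq^\infty$'' thus replaces a universal condition (the infimum of $\E[Y S_T]$ over \emph{all} rearrangements $Y$ of the density is strictly positive) by an existential one (\emph{some} element of the polar cone can be normalized), and these differ. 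Concretely: let $\frac{d\probq}{d\probp}=2\cdot 1_A$ with $\probp(A)=\tfrac12$ and $S_T=1_B$ with $\probp(B)=\tfrac12$. The rearrangement $Y_0:=2\cdot 1_B\sim\frac{d\probq}{d\probp}$ yields the positive, countably additive functional $\psi:=S_0\,\E[Y_0\,\cdot\,]$ with $\psi(S_T)=S_0$ and $\sigma_{\cA_\probq^\infty}(\psi)\geq 0$, so $(L^\infty)'_{+,S}\cap B(\cA_\probq^\infty)\neq\emptyset$ and your criterion is met; yet the anti-monotone rearrangement $Y_1:=2\cdot 1_{B^c}$ gives $\inf_{Z\sim S_T}\E_\probq[Z]=\E[S_T Y_1]=0$, so \textit{(b)} fails, and indeed $\rho_{\cA_\probq^\infty,S}(-1_{B^c})=+\infty$ because $\E[(-1_{B^c}+\frac{m}{S_0}1_B)\,Y_1]=-1<0$ for every $m\in\R$. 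So your argument, carried out, would prove a properness criterion, not the stated equivalence; the worry about singular functionals, which you flag as the main obstacle, is a side issue by comparison.

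What is actually needed, and what the paper does, is a characterization of finite-valuedness rather than properness: since $\cA_\probq^\infty$ is coherent, Proposition 3.6 and Theorem 3.16 of \cite{FarkasKochMunari2013a} show that \textit{(a)} is equivalent to $S_T$ being an \emph{interior point} of $\cA_\probq^\infty$ (intuitively, interiority plus the cone property lets one absorb any position by adding enough of $S_T$, which nonemptiness of the dual set cannot guarantee). Then, because the cash-additive risk measure $\rho_{\probq,\infty}$ is Lipschitz continuous on $L^\infty$ and $\cA_\probq^\infty=\{\rho_{\probq,\infty}\leq 0\}$, one has $\Interior\cA_\probq^\infty=\{\rho_{\probq,\infty}<0\}$, and $\rho_{\probq,\infty}(S_T)<0$ unravels, via the Chong--Rice identity already quoted in the paper, into $\inf_{Z\sim S_T}\E_\probq[Z]>0$, which is \textit{(b)}. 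If you wish to stay on the dual side, you would have to control the \emph{whole} normalized dual set (the scaling constants $S_0/\E[Y S_T]$ must remain bounded as $Y$ ranges over all rearrangements), which is exactly where the infimum condition re-enters; Lemma~\ref{lemma dual repr} alone cannot deliver it.
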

\begin{proof}
Since $\cA_\probq^\infty$ is coherent, it follows from Proposition 3.6 and Theorem 3.16 in \cite{FarkasKochMunari2013a} that \textit{(a)} is equivalent to $S_T$ being an interior point of $\cA_\probq^\infty$. By the continuity of the cash-additive risk measure $\rho_{\probq,\infty}$, this is equivalent to $\rho_{\probq,\infty}(S_T)<0$, concluding the proof.
\end{proof}

\medskip

Before proving the extension result for max-correlation risk measures, we need the following lemma.

\begin{lemma}
Let $1\leq p\leq\infty$ and assume that $ \frac{d\probq}{d\probp}\in L^{p^\prime}$. Then $\rho_{\probq,p}$ is finite-valued and, hence, continuous on $L^p$.
\end{lemma}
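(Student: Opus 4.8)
The plan is to bound the max-correlation functional from above by a constant multiple of the $L^p$-norm, exploiting the fact that the admissible densities entering the supremum all share the same law and hence the same $L^{p'}$-norm. First I would record the elementary but crucial observation that the $L^{p'}$-norm is a law-invariant quantity: if $Y\sim\frac{d\probq}{d\probp}$, then $\left\|Y\right\|_{p'}=\left\|\frac{d\probq}{d\probp}\right\|_{p'}$, since $\E[|Y|^{p'}]$ (respectively the essential supremum, when $p=1$) depends only on the distribution of $Y$. In particular, every $Y$ occurring in the supremum lies in $L^{p'}$ with the same finite norm $\kappa:=\left\|\frac{d\probq}{d\probp}\right\|_{p'}$, finiteness of $\kappa$ being exactly the standing assumption $\frac{d\probq}{d\probp}\in L^{p'}$.

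Next I would establish finiteness. For $X\in L^p$ and any admissible $Y$, Hölder's inequality gives $\E[-XY]\leq\left\|X\right\|_p\left\|Y\right\|_{p'}=\kappa\left\|X\right\|_p$, so taking the supremum over all such $Y$ yields $\rho_{\probq,p}(X)\leq\kappa\left\|X\right\|_p<\infty$. On the other hand, the density $\frac{d\probq}{d\probp}$ is itself an admissible choice, so the supremum is taken over a nonempty set and $\rho_{\probq,p}(X)\geq\E[-X\tfrac{d\probq}{d\probp}]>-\infty$. Hence $\rho_{\probq,p}$ is finite-valued on $L^p$.

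For continuity there are two routes. The direct one upgrades the Hölder bound to a Lipschitz estimate: for $X_1,X_2\in L^p$, choosing for each $\e>0$ a near-optimal $Y$ in the supremum defining $\rho_{\probq,p}(X_1)$ and comparing with the same $Y$ in the supremum defining $\rho_{\probq,p}(X_2)$ gives $\rho_{\probq,p}(X_1)-\rho_{\probq,p}(X_2)\leq\E[-(X_1-X_2)Y]+\e\leq\kappa\left\|X_1-X_2\right\|_p+\e$; letting $\e\to0$ and using symmetry yields $|\rho_{\probq,p}(X_1)-\rho_{\probq,p}(X_2)|\leq\kappa\left\|X_1-X_2\right\|_p$, which is (Lipschitz) continuity. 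Alternatively, since $\rho_{\probq,p}$ is cash-additive and lower semicontinuous, it equals $\rho_{\cA_\probq^p}$, and $\cA_\probq^p$ is convex; continuity then follows automatically from Theorem 1 in \cite{BiaginiFrittelli2009}, as recorded just before Theorem~\ref{theorem on Lp}.

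There is no serious obstacle in this argument; the only point requiring care is the uniformity of the $L^{p'}$-bound over the whole family $\{Y \,;\ Y\sim\frac{d\probq}{d\probp}\}$, and this is precisely what the law-invariance of the $L^{p'}$-norm delivers. Once that uniform bound is in place, both finiteness and continuity reduce to a single application of Hölder's inequality.
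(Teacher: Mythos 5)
Your proof is correct and follows essentially the same route as the paper: both rest on the H\"{o}lder bound $\E[-XY]\leq\left\|X\right\|_p\left\|\tfrac{d\probq}{d\probp}\right\|_{p'}$ (using that every admissible $Y$ has the same $L^{p'}$-norm by law invariance) to get finiteness, and then upgrade it to the Lipschitz estimate $|\rho_{\probq,p}(X_1)-\rho_{\probq,p}(X_2)|\leq\left\|\tfrac{d\probq}{d\probp}\right\|_{p'}\left\|X_1-X_2\right\|_p$ -- your near-optimal-maximizer comparison is just the unfolded version of the paper's appeal to subadditivity. The alternative continuity route via Theorem 1 of \cite{BiaginiFrittelli2009} that you mention is also valid, but unnecessary given the Lipschitz bound.
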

\begin{proof}
For $X\in L^p$ we have
\begin{equation}
\left|\rho_{\probq,p}(X)\right| \leq \left\|X\right\|_p\left\|\frac{d\probq}{d\probp}\right\|_{p'}
\end{equation}
so that $\rho_{\probq,p}$ is finite-valued on $L^p$. Moreover, for any $X,Y\in L^p$ we have $\rho_{\probq,p}(X)\le \rho_{\probq,p}(X-Y)+\rho_{\probq,p}(Y)$ by subadditivity and, consequently,
\begin{equation}
\label{bound for maxcorr}
\left|\rho_{\probq,p}(X)-\rho_{\probq,p}(Y)\right| \le \left\|X-Y\right\|_p \left\| \frac{d\probq}{d\probp}\right\|_{p'}\,.
\end{equation}
It follows that $\rho_{\probq,p}$ is Lipschitz-continuous on $L^p$.
\end{proof}

\medskip

We now characterize for which $1\leq p<\infty$ the risk measure $\rho_{\probq,\infty}$ admits a finite-valued, continuous extension to $L^p$.

\begin{proposition}
\label{maxcorr prop}
For $1\leq p<\infty$ the following holds:
\begin{enumerate}[(i)]
\item If $\frac{d\probq}{d\probp}\in L^{p'}$, then $\rho_{\probq,\infty}$ admits a unique finite-valued, continuous extension to $L^p$ which is given by $\rho_{\probq,p}$.
\item If $\frac{d\probq}{d\probp}\not\in L^{p'}$, then $\rho_{\probq,\infty}$ does not admit finite-valued, continuous extensions to $L^p$.
\end{enumerate}
\end{proposition}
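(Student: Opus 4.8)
The plan is to read part~(i) directly off the preceding lemma and to reduce part~(ii) to a failure of continuity at the origin, via Theorem~\ref{theorem on Lp}.

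For part~(i), the preceding lemma already gives that $\rho_{\probq,p}$ is finite-valued and Lipschitz-continuous on $L^p$ whenever $\frac{d\probq}{d\probp}\in L^{p'}$. It then suffices to observe that $\rho_{\probq,p}$ genuinely extends $\rho_{\probq,\infty}$: for $X\in L^\infty$ the two are given by the identical supremum $\sup\{\E[-XY] \,; \ Y\sim\frac{d\probq}{d\probp}\}$, and every competitor $Y\sim\frac{d\probq}{d\probp}$ lies in $L^1$, so $\E[-XY]$ is well-defined for bounded $X$ irrespective of $p$. Hence $\rho_{\probq,p}$ and $\rho_{\probq,\infty}$ agree on $L^\infty$, and uniqueness follows because any two continuous extensions must coincide on the dense subspace $L^\infty$, hence on all of $L^p$.

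For part~(ii), first note that $\rho_{\probq,\infty}$ is the cash-additive risk measure associated with the convex, law-invariant acceptance set $\cA_\probq^\infty$ (coherence makes $\rho_{\probq,\infty}$ cash additive), and that it is finite-valued on $L^\infty$ since $\frac{d\probq}{d\probp}\in L^1$; thus Theorem~\ref{theorem on Lp} applies. I would argue against its condition~(b): the aim is to exhibit a sequence $(X_n)\subset L^\infty$ with $X_n\to 0$ in $L^p$ but $\rho_{\probq,\infty}(X_n)\to\infty\neq 0=\rho_{\probq,\infty}(0)$, which rules out any finite-valued, continuous extension to $L^p$. The driving estimate is the trivial lower bound obtained by taking $Y=g:=\frac{d\probq}{d\probp}$ in the defining supremum, namely $\rho_{\probq,\infty}(X)\geq\E[-Xg]$ for every $X\in L^\infty$.

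The crux is then the claim that, since $g\geq 0$ and $g\notin L^{p'}$,
\[
\sup\{\E[Wg] \,; \ W\in L^\infty, \ W\geq 0, \ \|W\|_p\leq 1\}=\infty .
\]
This is the dual description of the (now infinite) $L^{p'}$-norm of $g$, localized to bounded $W$: truncating $g_k:=g\wedge k\in L^\infty$ gives $\|g_k\|_{p'}\uparrow\|g\|_{p'}=\infty$ by monotone convergence, and for fixed $k$ the functional $W\mapsto\E[Wg_k]$ is continuous on $L^p$ (as $g_k\in L^{p'}$), so a truncation-and-rescaling argument yields $\sup\{\E[Wg] \,; \ W\in L^\infty, W\geq 0, \|W\|_p\leq 1\}\geq\|g_k\|_{p'}$ for every $k$. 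Granting the claim, I pick $W_n\in L^\infty$ with $W_n\geq 0$, $\|W_n\|_p\leq 1$ and $\E[W_ng]\geq n^2$, and set $X_n:=-W_n/n$; then $\|X_n\|_p\leq 1/n\to 0$ while $\rho_{\probq,\infty}(X_n)\geq\E[-X_ng]=\E[W_ng]/n\geq n\to\infty$, contradicting condition~(b). The main obstacle is exactly this localized dual-norm computation: keeping the approximating $W$ bounded (so that the $X_n$ remain in $L^\infty$, where $\rho_{\probq,\infty}$ is defined) while still driving $\E[Wg]$ to infinity. The truncation argument handles $1<p<\infty$ and the limiting case $p=1$ (where $p'=\infty$ and one instead concentrates $W$ on high level sets of $g$) uniformly.
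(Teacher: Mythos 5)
Your proof is correct, and part (i) is essentially the paper's: the paper likewise reads it off the preceding lemma (finiteness and Lipschitz continuity of $\rho_{\probq,p}$), with uniqueness by density of $L^\infty$ in $L^p$. For part (ii), however, you take a genuinely different route. The paper negates conditions (c)/(d) of Theorem~\ref{theorem on Lp}: if a finite-valued, continuous extension existed, then $\cA_\probq^\infty$ would have nonempty interior in the $L^p$-topology; since $\cA_\probq^\infty$ lies in the half-space $V^{-1}([0,\infty))$ with $V(X):=\E_\probq[X]$, the functional $V$ would be bounded below on a nonempty $L^p$-open set, hence $L^p$-continuous, hence representable by some $Z\in L^{p'}$ with $Z=\frac{d\probq}{d\probp}$ a.s., contradicting $\frac{d\probq}{d\probp}\notin L^{p'}$. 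You instead negate condition (b) by an explicit construction: the lower bound $\rho_{\probq,\infty}(X)\geq\E[-Xg]$ with $g=\frac{d\probq}{d\probp}$, plus the divergence of $\sup\{\E[Wg]\,;\ W\in L^\infty,\ W\geq0,\ \|W\|_p\leq1\}$ when $g\notin L^{p'}$, yields $X_n:=-W_n/n\to0$ in $L^p$ with $\rho_{\probq,\infty}(X_n)\geq n$. Your handling of the crux is sound: for $1<p<\infty$ the maximizer for $\|g\wedge k\|_{p'}$ is proportional to $(g\wedge k)^{p'-1}$, which is bounded, and monotone convergence gives $\|g\wedge k\|_{p'}\uparrow\infty$; for $p=1$ one concentrates $W$ on $\{g\geq k\}$. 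Comparing the two: the paper's argument is shorter and exploits $L^p$-duality and separation cleanly; yours is more elementary and quantitative, producing an explicit witness to the failure of continuity at $0$, and in fact it does not need Theorem~\ref{theorem on Lp} at all, since any continuous extension would have to agree with $\rho_{\probq,\infty}$ along $(X_n)\subset L^\infty$ and converge to $0$ there. One cosmetic slip: coherence is not what makes $\rho_{\probq,\infty}$ cash additive; cash additivity follows because $\E[Y]=\E[g]=1$ for every $Y\sim g$. This does not affect the validity of your argument.
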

\begin{proof}
Since \textit{(i)} follows readily from the preceding lemma, we only need to prove \textit{(ii)}. Assume that $\rho_{\probq,\infty}$ admits a finite-valued and, hence continuous extension to $L^p$. Since $\cA_\probq^\infty$ is closed, Theorem~\ref{theorem on Lp} implies that it must have nonempty interior with respect to the $L^p$-topology. Consider now the linear functional $V:L^\infty\to\R$ defined by
\begin{equation}
V(X):=\E_\probq[X] \ \ \ \mbox{for} \ X\in L^\infty\,.
\end{equation}
Note that $\cA^\infty_\probq\subset V^{-1}([0,\infty))$ implies that $V^{-1}([0,\infty))$ has nonempty interior with respect to the $L^p$-topology. Therefore, $V$ is continuous with respect to that topology. As a result, there exists a continuous, linear functional $\overline{V}:L^p\to\R$ extending $V$. In particular, we can find $Z\in L^{p'}$ such that
\begin{equation}
\E[XZ]=\overline{V}(X)=V(X)=\E_\probq[X] \ \ \ \mbox{for all} \ X\in L^\infty\,,
\end{equation}
implying $\frac{d\probq}{d\probp}=Z$ almost surely. Hence, $\frac{d\probq}{d\probp}\in L^{p^\prime}$ contradicting the assumption. Consequently, $\rho_{\probq,\infty}$ does not admit any finite-valued and, hence, continuous extension to $L^p$.
\end{proof}

\medskip

Set now
\begin{equation}
q:=\sup\left\{p'\in[1,\infty) \,; \ \frac{d\probq}{d\probp}\in L^{p'}\right\}\,.
\end{equation}

The following result characterizes the index of finiteness of $\rho_{\cA_\probq^\infty,S}$. For $\rho_{\probq,\infty}$, it is an immediate consequence of Proposition~\ref{maxcorr prop}. The extension in the case of a general traded asset is ensured by Remark~\ref{remark index of fin}.

\begin{corollary}
\label{corollary on maxcorr}
Let $S=(S_0,S_T)$ be a traded asset with $S_T\in L^\infty$, and assume $\rho_{\cA_\probq^\infty,S}$ is finite-valued. Then $\Indexfin(\rho_{\cA_\probq^\infty,S})=q'$ and the index is attained if and only if $\frac{d\probq}{d\probp}\in L^q$.
\end{corollary}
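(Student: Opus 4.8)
The plan is to reduce the computation to the cash-additive max-correlation risk measure $\rho_{\probq,\infty}$ and then translate everything into a pure integrability question for the density $Z:=\frac{d\probq}{d\probp}$. Since $Z\geq 0$ and $\E[Z]=1$, we have $Z\in L^1$, so $\rho_{\probq,\infty}$ is finite-valued on $L^\infty$ and Remark~\ref{remark index of fin} applies, giving
\begin{equation}
\Indexfin(\rho_{\cA_\probq^\infty,S}) = \Indexfin(\rho_{\probq,\infty})\,.
\end{equation}
Unfolding the definition of the index and invoking the equivalence \textit{(a)}$\Leftrightarrow$\textit{(d)} of Theorem~\ref{theorem on Lp}, the right-hand side equals the infimum of those $p\in[1,\infty)$ for which $\rho_{\probq,\infty}$ admits a finite-valued, continuous extension to $L^p$. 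By Proposition~\ref{maxcorr prop}, such an extension exists exactly when $Z\in L^{p'}$, so
\begin{equation}
\Indexfin(\rho_{\probq,\infty}) = \inf\{p\in[1,\infty) \,; \ Z\in L^{p'}\}\,.
\end{equation}

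Next I would analyze the set $P:=\{p\in[1,\infty) \,; \ Z\in L^{p'}\}$. On a probability space $L^s\subset L^t$ whenever $s\geq t$, and the conjugation $p\mapsto p'$ is strictly decreasing, so increasing $p$ shrinks $p'$ and makes the membership $Z\in L^{p'}$ easier; thus $P$ is an up-set in $[1,\infty)$. Using that $p'=q$ is equivalent to $p=q'$, together with the definition $q=\sup\{s\in[1,\infty) \,; \ Z\in L^s\}$, I would argue that $p>q'$ forces $p'<q$ and hence $Z\in L^{p'}$, while $p<q'$ forces $p'>q$ and hence $Z\notin L^{p'}$. This pins down $\inf P=q'$ and therefore $\Indexfin(\rho_{\cA_\probq^\infty,S})=q'$.

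Finally, the index is attained precisely when $q'\in P$, i.e.\ when $Z\in L^{(q')'}=L^q$, using that the conjugate of $q'$ is again $q$; this is exactly the stated condition. I expect the only genuinely delicate point to be the behaviour at the critical exponent $p=q'$ (equivalently $p'=q$), where $Z\in L^q$ may either hold or fail and the elementary monotonicity argument is silent: this boundary case is precisely what separates the attained from the non-attained regime and must be handled by directly appealing to membership in $L^q$. The degenerate endpoints $q=1$ (where $q'=\infty$ and no proper extension to any $L^p$, $p<\infty$, exists, yet $Z\in L^1$ always holds, so the index is ``attained'' at $L^\infty$) and $q=\infty$ should also be checked for consistency, but both conform to the same criterion $Z\in L^q$.
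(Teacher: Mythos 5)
Your proposal is correct and takes essentially the same route as the paper's own (very terse) proof: reduce to the cash-additive $\rho_{\probq,\infty}$ via Remark~\ref{remark index of fin}, then combine Theorem~\ref{theorem on Lp} with Proposition~\ref{maxcorr prop} to identify the index with $\inf\{p\in[1,\infty) \,; \ \frac{d\probq}{d\probp}\in L^{p'}\}=q'$ and read off attainment as $\frac{d\probq}{d\probp}\in L^q$. The conjugation bookkeeping, the critical exponent $p=q'$, and the endpoint cases $q=1$ and $q=\infty$ that you spell out are exactly the details the paper dismisses as ``immediate.''
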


\smallskip

\begin{remark}
It is known that the max-correlation risk measure is a distortion risk measure, see e.g. Remark 2.6 in \cite{Rueschendorf2006}. Therefore, an alternative strategy to prove Corollary~\ref{corollary on maxcorr} would be to use the results in the next section. However, the above proof is more direct and simpler.
\end{remark}

%%%%%%%%%%%%%%%%%%%%%%%%%%%%%%%%%%%%%%%%%%%%

\section{Distortion risk measures}

In this section we rely on the results for cash-additive distortion risk measures obtained in \cite{KraetschmerSchiedZaehle2013} and derive the corresponding index of finiteness for general risk measures which need not be cash-additive.

\medskip

Let $\delta:[0,1]\to[0,1]$ be a concave, increasing function satisfying $\delta(0)=0$ and $\delta(1)=1$. For $X\in L^\infty$ we denote by $F_X$ the distribution function of $X$. The corresponding \textit{distortion risk measure} is the map $\rho_\delta:L^\infty\to\R$ defined by
\begin{equation}
\rho_\delta(X) := \int^0_{-\infty}\delta(F_X(x))dx-\int_0^\infty(1-\delta(F_X(x)))dx \ \ \ \mbox{for} \ X\in L^\infty\,.
\end{equation}
We refer to Section 4.6 in \cite{FoellmerSchied2011} for more details about this type of risk measures. As it is well-known, $\rho_\delta$ is a coherent, law-invariant, cash-additive risk measure, hence the corresponding acceptance set
\begin{equation}
\cA_\delta := \{X\in L^\infty \,; \ \rho_\delta(X)\leq0\}
\end{equation}
is law-invariant and coherent.

\medskip

First, we characterize when general risk measures associated to the acceptance set $\cA_\delta$ are finite-valued on $L^\infty$.

\begin{proposition}
Let $S=(S_0,S_T)$ be a traded asset with $S_T\in L^\infty$. The following are equivalent:
\begin{enumerate}[(a)]
  \item $\rho_{\cA_\delta,S}$ is finite-valued and, hence, continuous on $L^\infty$;
  \item $\delta(F_{S_T}(x))<1$ for some $x>0$.
\end{enumerate}
In particular, if $\delta$ is strictly increasing on some left neighborhood of $1$, then \textit{(a)} holds.
\end{proposition}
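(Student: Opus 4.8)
The plan is to follow the same strategy as in the max-correlation case, exploiting the coherence of $\cA_\delta$. Since $\cA_\delta$ is a convex cone, Proposition 3.6 and Theorem 3.16 in \cite{FarkasKochMunari2013a} reduce statement \textit{(a)} to the requirement that $S_T$ be an interior point of $\cA_\delta$. Because $\rho_\delta$ is a finite-valued, continuous, cash-additive risk measure on $L^\infty$, the interior of $\cA_\delta=\{\rho_\delta\leq0\}$ coincides with $\{\rho_\delta<0\}$; hence \textit{(a)} is equivalent to $\rho_\delta(S_T)<0$. The whole problem thus reduces to determining the sign of $\rho_\delta(S_T)$.

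Next I would compute $\rho_\delta(S_T)$ explicitly using that $S_T\in L^\infty_+$. Since $F_{S_T}(x)=0$ for $x<0$ and $\delta(0)=0$, the first integral in the defining formula for $\rho_\delta$ vanishes, leaving
\[
\rho_\delta(S_T) = -\int_0^\infty \bigl(1-\delta(F_{S_T}(x))\bigr)\,dx.
\]
The integrand is nonnegative everywhere because $\delta\leq1$, so $\rho_\delta(S_T)\leq0$ always, and $\rho_\delta(S_T)<0$ holds if and only if the set $\{x>0 \,; \ \delta(F_{S_T}(x))<1\}$ has strictly positive Lebesgue measure.

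The key step is then to show that this set has positive measure precisely when it is nonempty, that is, precisely when \textit{(b)} holds. The forward implication is immediate. For the converse, suppose $\delta(F_{S_T}(x_0))<1$ for some $x_0>0$; since $F_{S_T}$ is nondecreasing and $\delta$ is increasing, we have $\delta(F_{S_T}(x))\leq\delta(F_{S_T}(x_0))<1$ for every $x\in(0,x_0]$, so the set contains an interval of length $x_0>0$ and the integral is strictly positive. This establishes the equivalence of \textit{(a)} and \textit{(b)}.

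Finally, for the ``in particular'' clause, I would argue that strict monotonicity of $\delta$ on a left neighborhood of $1$ forces $\delta(s)<1$ for every $s<1$: for $s$ in that neighborhood this is strictness against $\delta(1)=1$, and for smaller $s$ one uses monotonicity to compare with a point of the neighborhood. It then suffices to exhibit some $x_0>0$ with $F_{S_T}(x_0)<1$. As $S_T$ is nonzero and nonnegative, $\probp(S_T>0)>0$, and writing $\{S_T>0\}=\bigcup_n\{S_T>1/n\}$ yields an $n$ with $\probp(S_T>1/n)>0$; taking $x_0=1/n$ gives $F_{S_T}(x_0)=1-\probp(S_T>x_0)<1$, whence $\delta(F_{S_T}(x_0))<1$ and \textit{(b)} holds. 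The only mildly delicate point in the whole argument is the monotonicity passage in the third paragraph that turns a single pointwise inequality into positivity of the integral; everything else is bookkeeping with the distortion formula and the cited structural results.
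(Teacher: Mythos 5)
Your proposal is correct and follows essentially the same route as the paper: reduce \textit{(a)} to $S_T$ being an interior point of the coherent set $\cA_\delta$ via Proposition 3.6 and Theorem 3.16 of \cite{FarkasKochMunari2013a}, identify the interior with $\{\rho_\delta<0\}$ using cash additivity, compute $\rho_\delta(S_T)=-\int_0^\infty(1-\delta(F_{S_T}(x)))\,dx$, and conclude by monotonicity of $\delta$ and $F_{S_T}$. You merely spell out two points the paper leaves implicit --- that a single point with $\delta(F_{S_T}(x_0))<1$ forces the integrand to be positive on all of $(0,x_0]$, and the verification of the ``in particular'' clause --- both of which are handled correctly.
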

\begin{proof}
First, note that $\cA_\delta$ has nonempty interior in $L^\infty$ because it contains a translate of $L^\infty_+$ by monotonicity, and the corresponding interior points are those $X\in L^\infty$ satisfying $\rho_{\cA_\delta}(X)<0$. By combining Proposition 3.6 and Theorem 3.16 in \cite{FarkasKochMunari2013a}, it follows that $\rho_{\cA_\delta,S}$ is finite-valued on $L^\infty$ if and only if $S_T$ belongs to the interior of $\cA_\delta$. As a result, the assertion \textit{(a)} is then equivalent to
\begin{equation}
\rho_{\cA_\delta}(S_T) = -\int_0^\infty(1-\delta(F_{S_T}(x)))dx < 0\,,
\end{equation}
which, in turn, is equivalent to \textit{(b)} by virtue of the monotonicity of $\delta$.
\end{proof}

\medskip

Now, define
\begin{equation}
q := \sup\left\{p\in[1,\infty) \,; \ \int_0^1(\delta'_+(\lambda))^p d\lambda<\infty\right\}
\end{equation}
where $\delta'_+$ denotes the right derivative of $\delta$.

\smallskip

The following result follows directly from Proposition 2.22 in \cite{KraetschmerSchiedZaehle2013} combined with Remark \ref{remark index of fin}.

\begin{proposition}
Let $S=(S_0,S_T)$ be a traded asset with $S_T\in L^\infty$, and assume $\rho_{\cA_\delta,S}$ is finite-valued. Then $\Indexfin(\rho_{\cA_\delta,S})=q'$ and the index is attained if and only if $\int_0^1(\delta'_+(\lambda))^q d\lambda<\infty$.
\end{proposition}

\smallskip

\begin{example}
Let $S=(S_0,S_T)$ be a traded asset with $S_T\in L^\infty$, and assume the corresponding risk measure $\rho_{\cA_\delta,S}$ is finite-valued over $L^\infty$. The following distortion functions are discussed in \cite{ChernyMadan2010}; see also \cite{KraetschmerSchiedZaehle2013}.

\smallskip

The MAXVAR risk measure corresponds to the distortion function
\begin{equation}
\delta(x)=x^{\frac{1}{\gamma}} \ \ \ \mbox{for} \ x\in[0,1] \ \mbox{and} \ \gamma\geq1\,.
\end{equation}
A direct computation shows that $\Indexfin(\rho_{\cA_\delta,S})=\gamma$ and that the index is not attained.

\smallskip

Similarly, the MINVAR risk measure corresponds to
\begin{equation}
\delta(x)=1-(1-x)^\gamma \ \ \ \mbox{for} \ x\in[0,1] \ \mbox{and} \ \gamma\geq1\,.
\end{equation}
In this case, $\Indexfin(\rho_{\cA_\delta,S})=1$ and the index is attained.

\smallskip

The MAXMINVAR risk measure is associated to the distortion
\begin{equation}
\delta(x)=(1-(1-x)^\gamma)^{\frac{1}{\gamma}} \ \ \ \mbox{for} \ x\in[0,1] \ \mbox{and} \ \gamma\geq1\,.
\end{equation}
Since $\delta'(x)\sim(\gamma x)^{\frac{1}{\gamma}-1}$ for $x\to0$, it follows that $\Indexfin(\rho_{\cA_\delta,S})=\gamma$ and the index is not attained.

\smallskip

Similarly, the MINMAXVAR risk measure corresponding to
\begin{equation}
\delta(x)=\left(1-(1-x)^{\frac{1}{\gamma}}\right)^\gamma \ \ \ \mbox{for} \ x\in[0,1] \ \mbox{and} \ \gamma\geq1
\end{equation}
is such that $\Indexfin(\rho_{\cA_\delta,S})=\gamma$, and the index is not attained.

\smallskip

We can also consider the distortion
\begin{equation}
\delta(x)=\left(1-(1-x)^{\frac{1}{\beta}}\right)^\gamma \ \ \ \mbox{for} \ x\in[0,1] \ \mbox{and} \ \beta,\gamma\geq1\,.
\end{equation}
In this case $\Indexfin(\rho_{\cA_\delta,S})=\beta$, in accordance with Example 2.23 in \cite{KraetschmerSchiedZaehle2013}, and the index is not attained.
\end{example}

\begin{example}
Consider a distortion function of the form
\begin{equation}
\delta(x)=\frac{1}{\log(2)}\log(1+x) \ \ \ \mbox{for} \ x\in[0,1]\,.
\end{equation}
Then it is immediate to see that $\Indexfin(\rho_{\cA_\delta,S})=1$ and that the index is attained.
\end{example}

%%%%%%%%%%%%%%%%%%%%%%%%%%%%%%%%%%%%%%%%%%%%

\bibliographystyle{plain}

\end{document}